\title{Statistical Analysis of Multi-Antenna Relay Systems\\ and Power Allocation Algorithms in a Relay with Partial Channel State Information}
\newtheorem{thm}{Theorem}
\begin{document}

\author{Mehdi~M.~Molu, Alister Burr and Norbert~Goertz
  \thanks{Mehdi~M.~Molu and Alister Burr are with the
    Department of Electronics, University of York,
    UK, email:\{mehdi.molu,alister.burr\}@york.ac.uk. Norbert Goertz is with the Institute of Telecommunications,
    Vienna University of Technology, Austria, email: norbert.goertz@nt.tuwien.ac.at.

The paper is accepted for presentation in part at the European Conference on Networks and Communications~(EuCNC), 2015, Paris, France}  }
\markboth{IEEE Transactions on Wireless Communications, Accepted for Publication}{}
\maketitle

\begin{abstract}
The performance of a dual-hop MIMO relay network is studied in this paper. The relay is assumed to have access to the statistical channel state information of its preceding and following channels and it is assumed that fading at the antennas of the relay is correlated. The cumulative density function~(cdf) of the received SNR at the destination is first studied and closed-form expressions are derived for the asymptotic cases of the fully-correlated and non-correlated scenarios; moreover, the statistical characteristics of the SNR are further studied and an approximate cdf of the SNR is derived for arbitrary correlation. The cdf is a multipartite function which does not easily lend itself to further mathematical calculations, e.g., rate  optimization. However, we use it to propose a simple power allocation algorithm which 
we call ``proportional power allocation". The algorithm is explained in detail for the case of two antennas and three antennas at the relay and the extension of the algorithm to a relay with an arbitrary number of the antennas is discussed. 
Although the proposed method is not claimed to be optimal, 
the result is indistinguishable from the benchmark obtained using exhaustive search. The simplicity of the algorithm combined with its precision is indeed attractive from the practical point of view. 
\end{abstract}
\begin{IEEEkeywords}
Cooperative Communications, Amplify-and-Forward, Multi-Antenna Relay, Statistical Channel State Information,
Largest Eigenmode Relaying 
\end{IEEEkeywords}

\section{Introduction}
\label{Sec:Introduction}
  \lettrine{I}t was proved by van der Meulen in \cite{Va:1971} that the capacity of a three node communication system can, potentially, be larger than the capacity of a point-to-point communication system. Consequently, 
analysis of communication systems in which the transceiver nodes cooperatively transmit their data to an intended final receiver has been a rather active field of research
 in last decade and numerous  papers including, e.g., \cite{LaTsWo:2004,SeErAa-Part1:2003,SeErAa-Part2:2003,CoGa:1979} have investigated cooperative communication systems. 
During the infancy of the concept of cooperative communications, substantial work was carried out, investigating  cooperating nodes with single antennas. Several promising relaying protocols were proposed; among them, Amplify and Forward~(AF) is  intensively studied in the literature; hence, in this paper, we will focus on AF relaying too.

Employing multiple antennas in communication nodes is another technique proved to be capable of enhancing transmission rates (see e.g., \cite{Te:1995}). Employing multiple antennas in the nodes of a cooperative communication system has been an active research trend during the last few years.
 Assuming multiple antennas at the relay, one major task is to design a suitable amplification matrix in the relay. 
Indeed, depending on the
available Channel State Information (CSI) at the relay, the amplification matrix can, potentially, be different.
Moreover different communication systems can demand the optimization of different  desired performance measure; hence, different ``optimal''
relaying protocols will exist: for instance, the non-regenerative
relaying matrices, e.g., in \cite{GuHa:2008,MoCh:2009,KhRo:2010,cumanan2013mmse}, are
designed to minimize Mean Square Error (MSE) but other relaying
matrices,
e.g., in \cite{MoGo:2014TWC,MuViAg:2007,TaHu:2007,DhMcMaBe:2011,JeSeLeKiKi:2012},
are assumed to maximize the achievable rates. 

Assuming statistical CSI at a transceiver node is interesting from a practical point of view; in particular, in rapidly changing channels, assuming perfect CSI in a relay node is, indeed, unrealistic, hence, a large body of the literature investigates AF cooperative systems wherein a single antenna relay node has access  only to the statistical CSI~(see, e.g.,\cite{FaBe:2008TWC}). Note that single antenna AF relaying systems, with statistical CSI at the relay, are usually referred to as ``fixed gain" AF relaying. In spite of the importance of cooperative communication systems with statistical CSI knowledge, very few papers consider the problem when the relay node is equipped with multiple antennas. Moreover, except \cite{MoGo:2014TWC}, we are not aware of any other paper assuming fading correlation in the relay when only the covariance of the channels is known to the relay. Note that fading correlation at a transceiver can be due to an unobstructed node or space limits at the node which forces the  antennas to be closely located. Justifications to assume transceivers with fading correlation can be found in \cite{Shiu:1999-KAP,Shiu:2000-TC}.

The first contribution of this paper is to provide a statistical analysis of the received Signal to Noise Ratio~(SNR) at the destination. There are two major motivations for studying the statistical characteristics of the SNR: 
\begin{itemize}
\item Outage probability is directly related to received SNR. Indeed, the cumulative density function~(cdf) of the SNR corresponds to the outage probability, and so, the cdf of SNR will be derived in this paper.
\item By deriving the cdf of SNR, the mathematical complexity of direct maximization of the achievable rate~(i.e., ``optimal'' power allocation) will be revealed. It will be an excellent motivation for devising alternative approaches with reasonable complexity.
\end{itemize} 

Accordingly, the second major contribution of the paper is to study the problem of power allocation in the relay  and hence to devise a new and simple power allocation scheme for multi-antenna relays. 
 \cite{JeSeLeKiKi:2012,DhMcMaBe:2011,Zappone:2014-TSP,MoGo:2014TWC} consider the similar problem of the power allocation in the relay when statistical CSI is available  in the nodes~(either the source or relay nodes). However, while  \cite{JeSeLeKiKi:2012} considers the high SNR regime of the system, \cite{DhMcMaBe:2011,Zappone:2014-TSP} assumes correlation at the source node. In \cite{MoGo:2014TWC}, we study a cooperative communication system  wherein the relay node is equipped with multiple antennas that are spatially correlated. The considered system is studied only at low SNR and it is proved that Largest Eigenmode Relaying~(LER) is the optimal transmission method at low SNR; however, the system is not studied in the moderate and high SNR region. To the best of our knowledge, the design of an amplification matrix in an AF cooperative system where \emph{only} the statistical CSI is known to the relay is an open problem, and one that will be tackled in this paper.  We provide a scheme which operates in the regime beyond that where LER is optimal, and whose performance is indistinguishable from the benchmark provided by exhaustive search.

This paper is organized as follows: In Section~\ref{Sec:System Model}, the system model is introduced and some preliminary existing results are recalled. Section~\ref{Sec:Distribution of Received SNR at Destination} deals with characterizing the statistics of the SNR at the destination. In Section~\ref{Sec:Two Antennas in the Relay}, a simple power allocation algorithm is introduced for a relay with only two antennas; the proposed algorithm is called ``proportional power allocation" and has been extended for a system with multiple antenna relay node in section~\ref{Sec:Three Antennas in the Relay} and \ref{Sec:n Antennas in the Relay} and, finally, the results are summarized in Section~\ref{Sec:Conclusion}.

\section{System Model and Preliminaries}
\label{Sec:System Model}
\subsection{Notation}
\label{Subsec:Notation}
Matrices are represented by boldface upper cases
(${\boldsymbol{H}}$). Column and row vectors are denoted by boldface
lower cases (${\boldsymbol{h}}$), and ${h_i}$ indicates the ${i}$-th element
of ${\boldsymbol{h}}$. The superscript ${(\cdot)^H}$ stands for Hermitian
transposition. We refer to the identity matrix by
${\boldsymbol{I}}$. The expectation operation is indicated by
${\mathbb{E}\lbrace\cdot\rbrace}$, the probability of a random variable is indicated by $\mathbb{P}(\cdot)$ and ${f_X(x)}$ is reserved for probability density
functions (pdf) of random variable ${X}$; ${\boldsymbol{\Lambda}_\Sigma}$ represents a diagonal
matrix with elements organized in descending order and
${\lambda_i^{\Sigma}}$ denotes the ${i}$-th diagonal element of
${\boldsymbol{\Lambda}_\Sigma}$. For simplicity of notation,
${(\lambda_i^{\Sigma})^2}$ is abbreviated by ${\lambda_i^{\Sigma2}}$. The
trace of a matrix is denoted by ${\mathrm{Tr}(\cdot)}$.

\subsection{System Model}
\label{Subsec:System Model}
In this paper a dual hop, half duplex MIMO communication system is investigated. 
Assume a source node (equipped with ${n_\text{S}}$ antennas)  transmits data to a single antenna  destination  via an intermediate relay node which has ${n_\text{R}}$ antennas. The proposed system models the downlink of a wide range of communication systems in which the user terminal is equipped with single antenna due to space limitation, for instance, cellular networks or sensor networks. Moreover, \emph{fixed-gain} AF cooperative systems with multiple antennas at the relay is an open problem which has received little attention and so the proposed system model is a good step forward for understanding fixed gain AF systems. It is assumed that a direct link
between the source and the destination is not available. The half
duplex constraint is accomplished by time sharing between the source
and the relay; i.e.~each transmission period is divided into two time
slots: the source transmits during the first time slot and the relay
during the second one. The relay remains silent during the source
transmission and vice versa. It is assumed that the source does not
have access to any statistical or instantaneous channel state
information (CSI). Moreover, it is assumed that the antennas in the source node are sufficiently far apart and so no correlation is assumed at the source . The signal received at the relay
(${\boldsymbol{y}_\text{R}}$) due to the source transmission
is given by 
\begin{equation}
\label{eq:S-RTransmission}
\boldsymbol{y}_\text{R}=\boldsymbol{H}_1 \boldsymbol{x}+\boldsymbol{w}_\text{R}
\end{equation}
where the ${n_\text{R}\times n_\text{S}}$ matrix ${\boldsymbol{H}_1}$
represents the channel between the source and the relay. With ${P_\text{S}}$ the power
constraint of the source, the column vector ${\boldsymbol{x}}$ is the
signal transmitted from the source with
${\boldsymbol{Q}=\mathbb{E}(\boldsymbol{x}\boldsymbol{x}^H)=\frac{P_\text{S}}{n_\text{S}}\boldsymbol{I}_{n_\text{S}}}$
and the column vector ${\boldsymbol{w}_\text{R}}$ represents the
receiver noise in the relay with elements independently drawn from a
complex Gaussian random variable with variance ${N_0}$. In this paper, it is assumed that spatial correlation occurs at the relay; the correlation can be due to space limit in the relay or due to fading correlation due to unobstructed relay node.  ${\boldsymbol{\Sigma}}$ represents the correlation matrix at the 
relay and therefore, using the Kronecker  model, ${\boldsymbol{H}_1}$ can be written as
\begin{equation}
\label{eq:Kronecker-H1}
\boldsymbol{H}_1 = \boldsymbol{\Sigma}^\frac{1}{2}\boldsymbol{H}_{1w}
\end{equation}
where  elements of ${\boldsymbol{H}_{1w}}$ are i.i.d., zero
mean, unit variance complex Gaussian random variables, independent of
each other.
The relay
multiplies ${\boldsymbol{y}_\text{R}}$ by the gain matrix ${\boldsymbol{F}}$
and forwards it to the destination. Then, the received signal at the
destination is 
\begin{eqnarray}
\label{eq:R-DTransmission}
y_\text{D}&=&\boldsymbol{h}_2 \boldsymbol{F}\boldsymbol{y}_\text{R}+w_\text{D}
\\
&=&\boldsymbol{h}_2\boldsymbol{F}\boldsymbol{H}_1 \boldsymbol{x}+\boldsymbol{h}_2\boldsymbol{F}
\boldsymbol{w}_\text{R}+w_\text{D}\nonumber
\end{eqnarray}
where the row vector ${\boldsymbol{h}_2}$ indicates the channel between
the relay and the destination; ${w_\text{D}}$ represents
the receiver noise at the destination. For simplicity, we assume that
${w_\text{D}}$ and ${\boldsymbol{w}_\text{R}}$ are statistically independent and identical, i.e.~${N_{0,w_\text{D}}=N_{0,\boldsymbol{w}_\text{R}}= N_0}$.
\\Due to the spatial correlation ${\boldsymbol{\Sigma}}$ at the relay, one  can factorize $\boldsymbol{h}_2$ as 
\begin{eqnarray}
\label{eq:Kronecker-h2}
\boldsymbol{h}_2 = \boldsymbol{h}_{2w} \boldsymbol{\Sigma}^{\frac{1}{2}}
\end{eqnarray} 
where  elements of ${\boldsymbol{h}_{2w}}$ are i.i.d., zero
mean, unit variance complex Gaussian random variables, independent of each other. Justification to assume
transceivers with spatial correlation can be found
in \cite{ShFoGaKa:2000, Sh:1999}. The correlation matrix ${\boldsymbol{\Sigma}}$ in the relay is decomposed using spectral decomposition as 
\begin{eqnarray}
\label{eq:Sigma-Decomposition}
\boldsymbol{\Sigma} = \boldsymbol{U}_\Sigma\boldsymbol{\Lambda}_\Sigma\boldsymbol{U}^H_\Sigma
\end{eqnarray}
where ${\boldsymbol{U}_\Sigma}$ is a unitary matrix whose columns are 
the eigenvectors corresponding to ${\boldsymbol{\Sigma}}$, and
${\boldsymbol{\Lambda}_\Sigma}$ is a diagonal matrix with the eigenvalues of
${\boldsymbol{\Sigma}}$ in decreasing order, i.e. ${\boldsymbol{\Lambda}_\Sigma = \mathrm{diag}[\lambda^{\Sigma}_1,\lambda^{\Sigma}_2, \cdots, \lambda^{\Sigma}_{n_\text{R}}]}$ where ${\lambda^{\Sigma}_1\ge \lambda^{\Sigma}_2\ge \cdots \ge \lambda^{\Sigma}_{n_\text{R}}\ge 0}$. Moreover, some of ${\lambda^{\Sigma}_i}$s \emph{can} possibly be zero.

\subsection{Preliminaries\footnote{The results in this subsection are taken from \cite{MoGo:2014TWC}. Interested reader is recommended to read \cite{MoGo:2014TWC} for a complete proof of the the ideas and the derivations; however, in order to make the paper self-contained and also for consistency of the notation, the relevant results from \cite{MoGo:2014TWC} are provided in this subsection. }}
\label{Sec:Preliminaries}
Ergodic capacity is one of the main performance criterion investigated in this paper. Using (\ref{eq:R-DTransmission}), the ergodic capacity of the
system is defined as 
\begin{eqnarray}
\label{eq:ErgodicCapactityDefinition}
C_{av}= \frac{1}{2} \max_{\substack{\boldsymbol{Q}=\frac{P_\text{S}}{n_\text{S}}\boldsymbol{I} \\F: \mathbb{E}\lbrace\| \boldsymbol{F} \boldsymbol{y}_\text{R} \|^2\rbrace\le P_\text{R}}} \mathbb{E} \lbrace C(\boldsymbol{H}_1,\boldsymbol{h}_2,\boldsymbol{F})\rbrace
\end{eqnarray}
 where ${C(\boldsymbol{H}_1,\boldsymbol{h}_2,\boldsymbol{F})}$ is the conditional transmission rate. For simplicity of notation, ${C(\boldsymbol{H}_1,\boldsymbol{h}_2,\boldsymbol{F})}$ is abbreviated by ${C(\cdot)}$ in the rest of the paper. Assuming perfect CSI of ${\boldsymbol{H}_1}$ and ${\boldsymbol{H}_2}$ at the destination and
${\boldsymbol{Q}=\frac{P_\text{S}}{n_\text{S}}\boldsymbol{I}_{n_\text{S}}}$
(equal transmit power from each antenna in the source,
because no channel knowledge is available there), the conditional mutual information ${C(\cdot)}$ for given channel matrices is 
\begin{eqnarray}
\label{eq:ErgodicCapactityLog}
C(\cdot) = \log\big(1+\frac{P_\text{S}}{n_\text{S}}\frac{ \boldsymbol{h}_2\boldsymbol{F}\boldsymbol{H}_1\boldsymbol{H}^H_1\boldsymbol{F}^H\boldsymbol{h}^H_2}{N_0(1+\boldsymbol{h}_2\boldsymbol{F}\boldsymbol{F}^H\boldsymbol{h}^H_2)}\big)
\end{eqnarray}
where
${N_0(1+\boldsymbol{h}_2\boldsymbol{F}\boldsymbol{F}^H\boldsymbol{h}^H_2)}$
is the total equivalent noise power which is assumed to remain constant for 
coherence time: we make a block fading assumption. In \cite{MoGo:2014TWC}, 
${\boldsymbol{F}}$ is found to be symmetric as 
\begin{eqnarray}
\label{eq:Optimal-G}
\boldsymbol{F} =\boldsymbol{G}^{\frac{1}{2}}
\end{eqnarray}
where the gain matrix ${\boldsymbol{G}}$ is derived as 
\begin{eqnarray}
\label{eq:G-Decomposition}
 \boldsymbol{G} = \boldsymbol{U}_\Sigma \boldsymbol{\Lambda}_G \boldsymbol{U}^{H}_{\Sigma}
\end{eqnarray} 
where $\boldsymbol{U}_\Sigma$ is the unitary matrix defined in \eqref{eq:Sigma-Decomposition}  and   ${\boldsymbol{\Lambda}_G = \mathrm{diag}[\lambda^{G}_1,\lambda^{G}_2, \cdots, \lambda^{G}_{n_\text{R}}]}$. Note that ${\lambda^G_i}$ values are to be specified according to the power constraint of the relay so that the maximization in (\ref{eq:ErgodicCapactityDefinition}) is accomplished; indeed, this is one of the main tasks to be handled in this paper.

Assuming (\ref{eq:Kronecker-h2}), (\ref{eq:Sigma-Decomposition}), (\ref{eq:Optimal-G}) and (\ref{eq:G-Decomposition}), the power constraint in the relay (i.e.~${\mathbb{E}\lbrace\| \boldsymbol{G}^\frac{1}{2} \boldsymbol{y}_\text{R} \|^2\rbrace\le P_\text{R}}$ in (\ref{eq:ErgodicCapactityDefinition})) is 
{\begin{eqnarray}
\label{eq:RelayPowConstraintLambda}
P_\text{S}\mathrm{Tr}(\boldsymbol{\Lambda}_\Sigma\boldsymbol{\Lambda}_{G}) + N_0\mathrm{Tr}(\boldsymbol{\Lambda}_{G})= P_\text{R}
\end{eqnarray}}
Note that the capacity will be achieved by consuming the entire power at the relay, and so, we assume --equality-- in~\eqref{eq:RelayPowConstraintLambda} instead of --inequality--.  
By combining (\ref{eq:Kronecker-H1}), (\ref{eq:Kronecker-h2}), (\ref{eq:ErgodicCapactityLog}), (\ref{eq:Optimal-G}) and assuming ${\gamma_\text{S}= P_\text{S}/N_0}$, one can write (\ref{eq:ErgodicCapactityLog}) as follows 
\begin{eqnarray}
\label{eq:ErgodicCapactityLog-gammaD}
C(\cdot) = \log(1 + \underbrace{\frac{\gamma_\text{S} \boldsymbol{h}_{2w}\boldsymbol{\Lambda}_{\Sigma}\boldsymbol{\Lambda}_G^\frac{1}{2}\boldsymbol{H}_{1w}\boldsymbol{H}^H_{1w}\boldsymbol{\Lambda}_{\Sigma}\boldsymbol{\Lambda}_G^\frac{1}{2}\boldsymbol{h}^H_{2w}}{n_\text{S}(1+\boldsymbol{h}_{2w}\boldsymbol{\Lambda}_\Sigma\boldsymbol{\Lambda}_G\boldsymbol{h}^H_{2w})}}_{\gamma_\text{D}})
\end{eqnarray}
where ${\gamma_{\text{D}}}$ represents received SNR in the destination which is a function of ${\boldsymbol{H}_{1w}}$, ${\boldsymbol{h}_{2w}}$, the correlation eigenvalues matrix ${\boldsymbol{\Lambda}_{\Sigma}}$ and the eigenvalues of the $\boldsymbol{G}$ matrix, i.e.  ${\boldsymbol{\Lambda}_{G}}$. ${\gamma_\text{D}}$ can be simplified according to 
\begin{eqnarray}
\label{eq:hR-sumH}
 \gamma_\text{D} =\frac{ \gamma_\text{S} \sum\limits_{i=1}^{n_\text{S}} \mid \boldsymbol{h}_{2w}\boldsymbol{\Lambda}_{G}^{\frac{1}{2}}\boldsymbol{\Lambda}_{\Sigma}\boldsymbol{h}_{1w,i}\mid^2}{n_\text{S}(1+ \boldsymbol{h}_{2w}\boldsymbol{\Lambda}_G\boldsymbol{\Lambda}_\Sigma\boldsymbol{h}^H_{2w})}
\end{eqnarray}
where ${\boldsymbol{h}_{1w,i}}$ represents the ${i}$th column of
${\boldsymbol{H}_{1w}}$. Let us assume
\begin{subequations}
  \begin{align}
\label{eq:Xj}
X_j&=|h_{2w,j}|^2\\
\label{eq:Y}
Y&=\frac{1}{n_\text{S}}\sum_{i=1}^{n_\text{S}} |h_{1w,i}|^2
  \end{align}
\end{subequations}

It is proved in \cite[Appendix~1]{MoGo:2014TWC} that ${ \gamma_\text{D}}$ in (\ref{eq:hR-sumH}) can be further simplified to 
\begin{eqnarray}
\label{eq:hR-XY}
 \gamma_\text{D}= \gamma_\text{S} Y \times \underbrace{\frac{\sum\limits_{j=1}^{\kappa} {\lambda}_{j}^{G}{\lambda}_{j}^{\Sigma 2}  X_j }{1+ \sum\limits_{j=1}^{\kappa} {\lambda}_{j}^{G}{\lambda}_{j}^{\Sigma}  X_j }}_{X}
\end{eqnarray} 
where $\kappa$ is the minimum of $n_\text{R}$ and Number of Non-Zero (nnz) $\lambda_{j}^{\Sigma}$, i.e.,
\begin{eqnarray}
\label{eq:kappa}
\kappa = \mathrm{min}(n_\text{R}, \text{nnz}(\lambda_{j}^{\Sigma}))
\end{eqnarray}
Note that $Y$ and  $X_j$ correspond to the S-R and R-D  channels, respectively. The random variable
\begin{equation}
\label{eq:X}
X= \frac{\sum\limits_{j=1}^{\kappa} {\lambda}_{j}^{G}{\lambda}_{j}^{\Sigma 2}  X_j }{1+ \sum\limits_{j=1}^{\kappa} {\lambda}_{j}^{G}{\lambda}_{j}^{\Sigma}  X_j }
\end{equation}
in (\ref{eq:hR-XY}) incorporates the effect of the R-D link as well as the effect of power allocation due to $\lambda^G_j$.
Furthermore, since we assume Rayleigh fading in both the S-R and R-D links, hence, ${X_j}$ is exponentially distributed with unit mean and ${Y}$ has an Erlang-distribution with rate and shape parameters equal to ${n_\text{S}}$.

Although it is proposed in \cite{MoGo:2014TWC} that the optimal $\boldsymbol{G}$ should be diagonalized according to (\ref{eq:G-Decomposition}) where $\boldsymbol{\Lambda}_G$ is a diagonal matrix with its components organized in descending order, an optimal power allocation method to distribute relay's transmit power among different ${\lambda^G_j}$s is not discussed. That is still an open problem but will be addressed in this paper. 

\subsection{Contribution}
\label{Sec:Contribution}

There are two main problems investigated in the following sections, each leading to novel contributions:
\begin{itemize}

\item We evaluate, for the first time,  the statistical characteristics of the ${\gamma_\text{D}}$ introduced in 
(\ref{eq:hR-XY}). Due to its mathematical complexity,  the exact pdf of ${\gamma_\text{D}}$ is not derived but an approximation to the pdf is provided in 
this work. The  approximated pdf is then used for 
calculating the outage probability and it is 
illustrated, by the simulations, that the  approximated cdf leads to
rather accurate results. Moreover, the exact cdf of ${\gamma_\text{D}}$ will be derived for the two
asymptotic scenarios of full-correlation and no-correlation at the relay.  Although this novel cdf  is helpful for outage analysis of the system, it is too complicated to be used for the analysis of the ergodic capacity.

\item In order to approximate the maximum achievable transmission rate, a very simple power allocation algorithm at the relay is introduced in this paper. As the optimal power allocation at moderate and high SNR is still an open problem\footnote{The optimal power allocation at low SNR was proposed in \cite{MoGo:2014TWC}.}, for the purpose of comparison, exhaustive search over various discrete values of the rates is used as a benchmark. The values of the rates are obtained by allocating various amount of the power among different eigenmodes. According to the simulations, the proposed power allocation algorithm approximates the benchmark with insignificant difference. 
\end{itemize}

\begin{figure}[t]
\begin{center}
        \includegraphics[width=0.45\textwidth, height =0.43 \textwidth]{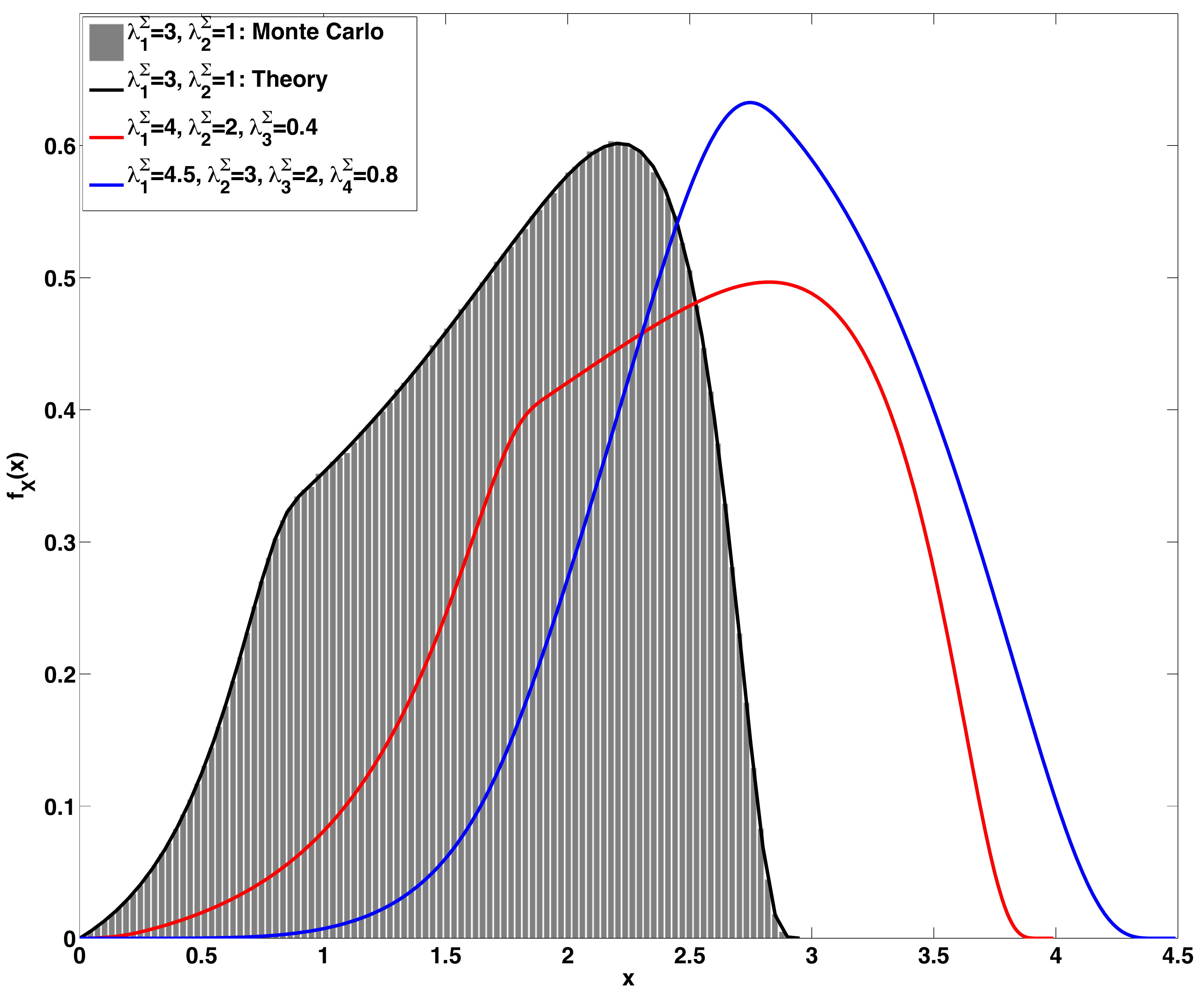}
\end{center} 
\caption{Pdf of $X$ with various number of $\lambda^{\Sigma}_j$ parameters. Monte Carlo simulations validate the correctness of the theoretical results.}
\label{fig:PDF_X}
\end{figure} 
\section{Statistical Analysis of Received SNR at Destination} 
\label{Sec:Distribution of Received SNR at Destination}
The SNR distribution in the relay is directly related to the ergodic capacity of the system evaluated in this paper. In order to maximize ${\mathbb{E}\left\lbrace C(\cdot) \right\rbrace}$ in (\ref{eq:ErgodicCapactityDefinition}), one should distribute the available relay power appropriately among  different ${\lambda^G_j}$s in (\ref{eq:hR-XY})  so that ${\mathbb{E}\left\lbrace C(\cdot) \right\rbrace}$ is maximized. Before we continue with the statistical characterization of $\gamma_\text{D}$, two extreme scenarios are studied: we will investigate $\gamma_\text{D}$ when the relay does not experience any fading correlation and also when the antennas in the relay are fully correlated. These two scenarios, indeed, provide performance bounds, and so, the performance of a system with partial correlation will fall between the two bounds.

\subsection{Statistical Characteristics of $\gamma_\text{D}$ Assuming Full Correlation}
\label{Sec:full Correlation}
Full correlation~(FC) at the relay is equivalent to considering a system where all the elements of the $\boldsymbol{\Sigma}$ are unity\footnote{We assume normalized correlation for simplicity of the notation. Generalization to a case with non-unity full correlation is straightforward. Similar normalization is assumed in Section~\ref{Sec:No Correlation} too.}, i.e., ${\boldsymbol{\Sigma} =\boldsymbol{1}_{n_\text{R}\times n_\text{R}}}$; consequently, it is easy to see  that ${\lambda^{\Sigma}_1 = n_\text{R}}$ and $\lambda^{\Sigma}_j = 0$ for $j\geq 2$, and so, the random variable $X$ in \eqref{eq:X} can be written as 
\begin{eqnarray}
\label{eq:X-FC}
X_{FC} = \frac{\lambda^{G}_1 \lambda^{\Sigma 2}_1 X_1 }{1+ \lambda^{G}_1 \lambda^{\Sigma }_1 X_1} =  
\frac{n^2_\text{R} \lambda^{G}_1 X_1 }{1+ n_\text{R} \lambda^{G}_1 X_1} = n_\text{R} \frac{V}{1+ V}
\end{eqnarray}
where, the random variable ${V=n_\text{R} \lambda^{G}_1 X_1}$ is exponentially distributed with mean ${n_\text{R} \lambda^{G}_1}$.
The following theorem introduces the cdf of $\gamma_\text{D}$ for a fully correlated relay:
\begin{thm}
Assuming fully correlated antennas at the relay, the cdf of $\gamma_\text{D}$ is
\begin{eqnarray}
\label{eq:CDF-gammaD-FC}
F_{\gamma_{\text{D}-{FC}}}(x)&=& 1-2(n_\text{S}w)^{n_\text{S}}\mathrm{e}^{-n_\text{S}w}\\
&& \hspace*{-10mm}\times\sum\limits_{m=0}^{n_\text{S}-1}
\frac{(\lambda_1^G n_\text{S}n_\text{R}w)^{-(m+1)/2}}{m! (n_\text{S}-m-1)!}\mathrm{K}_{m+1}(2\sqrt{\frac{n_\text{S}w}{\lambda_1^G n_\text{R}}})
\nonumber
\end{eqnarray}
\end{thm}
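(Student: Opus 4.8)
The plan is to reduce \eqref{eq:CDF-gammaD-FC} to the distribution of a product of two elementary random variables and then invoke a standard modified-Bessel integral. By \eqref{eq:hR-XY} and \eqref{eq:X-FC}, under full correlation $\gamma_\text{D}=\gamma_\text{S}n_\text{R}\,Y\,Z$ with $Z=V/(1+V)$, so that $F_{\gamma_{\text{D}-FC}}(x)=\mathbb{P}(YZ\le w)$, where $w$ denotes the normalised threshold (i.e.\ $\gamma_\text{D}\le x\iff YZ\le w$, so $w=x/(\gamma_\text{S}n_\text{R})$). First I would compute the cdf of $Z$: since $V$ is exponential with mean $n_\text{R}\lambda_1^G$ and $v\mapsto v/(1+v)$ is an increasing bijection of $[0,\infty)$ onto $[0,1)$, inversion gives $F_Z(z)=1-\exp\!\big(-\tfrac{z}{n_\text{R}\lambda_1^G(1-z)}\big)$ for $0\le z<1$ and $F_Z(z)=1$ for $z\ge 1$ (this is precisely the one-eigenvalue cdf behind Fig.~\ref{fig:PDF_X}, rescaled by $n_\text{R}$).

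Next I would condition on $Y=y$. Recalling that $Y$ is Erlang with shape and rate $n_\text{S}$, i.e.\ $f_Y(y)=\tfrac{n_\text{S}^{n_\text{S}}}{(n_\text{S}-1)!}y^{n_\text{S}-1}\mathrm{e}^{-n_\text{S}y}$, we get $F_{\gamma_{\text{D}-FC}}(x)=\int_0^\infty F_Z(w/y)f_Y(y)\,dy$. Because $F_Z(w/y)=1$ for all $y\le w$, the integral splits and collapses to $F_{\gamma_{\text{D}-FC}}(x)=1-\int_w^\infty \exp\!\big(-\tfrac{w}{n_\text{R}\lambda_1^G(y-w)}\big)f_Y(y)\,dy$; the truncation of $F_Z$ at $z=1$ is exactly what produces the leading ``$1-$'' and it also makes the remaining integral manifestly convergent as $y\downarrow w$.

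The final step is evaluating that integral. Substituting $t=y-w$, pulling out $\mathrm{e}^{-n_\text{S}w}$ from $f_Y$, and expanding $(t+w)^{n_\text{S}-1}=\sum_{m=0}^{n_\text{S}-1}\binom{n_\text{S}-1}{m}t^{m}w^{n_\text{S}-1-m}$ reduces it to a finite sum of integrals $\int_0^\infty t^{m}\exp\!\big(-n_\text{S}t-\tfrac{w}{n_\text{R}\lambda_1^G}t^{-1}\big)dt$, each equal to $2\big(\tfrac{w}{n_\text{S}n_\text{R}\lambda_1^G}\big)^{(m+1)/2}\mathrm{K}_{m+1}\!\big(2\sqrt{n_\text{S}w/(n_\text{R}\lambda_1^G)}\big)$ by $\int_0^\infty t^{\nu-1}\mathrm{e}^{-\beta t-\gamma/t}dt=2(\gamma/\beta)^{\nu/2}\mathrm{K}_\nu(2\sqrt{\beta\gamma})$ (Gradshteyn--Ryzhik 3.471.9). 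Using $\binom{n_\text{S}-1}{m}=\tfrac{(n_\text{S}-1)!}{m!(n_\text{S}-m-1)!}$ to cancel $(n_\text{S}-1)!$ and combining the powers of $w$ (the exponent is $n_\text{S}-1-m+\tfrac{m+1}{2}=n_\text{S}-\tfrac{m+1}{2}$), $n_\text{S}$, $n_\text{R}$ and $\lambda_1^G$ into $(n_\text{S}w)^{n_\text{S}}$ and $(\lambda_1^G n_\text{S}n_\text{R}w)^{-(m+1)/2}$ gives exactly \eqref{eq:CDF-gammaD-FC}. I do not expect a genuine obstacle here; the only care required is (i) handling the truncation of $F_Z$ at $z=1$ so the ``$1-$'' term and convergence come out correctly, and (ii) the bookkeeping of constants so the $n_\text{R}$-dependence inside the Bessel argument and in the prefactor matches the stated form.
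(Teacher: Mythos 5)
Your proposal is correct and follows essentially the same route as the paper's Appendix~A proof: the same factorisation $\gamma_\text{D}=\gamma_\text{S}n_\text{R}YV/(1+V)$, conditioning on $Y$, restriction of the integral to $y>w$, the substitution $t=y-w$ with the binomial expansion of $(t+w)^{n_\text{S}-1}$, and Gradshteyn--Ryzhik 3.471.9. Your explicit derivation of $F_Z$ and its truncation at $z=1$ is a slightly cleaner justification of the lower integration limit than the paper's terse remark, but it is not a different argument.
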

where $\mathrm{K}_{\nu}(\cdot)$ is the modified Bessel function of the second kind, ${\nu\text{-th}}$ order and $w = \frac{x}{n_\text{R}\gamma_\text{S}}$. 
\begin{proof}
See Appendix~\ref{app:SNR Distribution: Full Correlation Scenario} for a detailed proof.
\end{proof}
The subscript ``FC" in $F_{\gamma_{\text{D}-{FC}}}(x)$ indicates the Full Correlation scenario. One can easily derive the pdf of $\gamma_\text{D}$ for the full correlation scenario by taking the derivative of \eqref{eq:CDF-gammaD-FC} with respect to $x$.

In the next section, we study the statistical characteristics of the system for the non-correlated fading scenario.

\subsection{Statistical Characteristics of $\gamma_\text{D}$ Assuming No Correlation}
\label{Sec:No Correlation}
No Correlation~(NC) at the relay translates to  $\boldsymbol{\Sigma}=\boldsymbol{I}$. Such a scenario will occur when the relay antennas are placed sufficiently far apart and that the relay node is placed in a rich scattering environment. Assuming $\boldsymbol{\Sigma}=\boldsymbol{I}$, it is straightforward to conclude that $\lambda_1^{\Sigma}=\lambda_1^{\Sigma}=\cdots =\lambda_{n_\text{R}}^{\Sigma} =1$. On the other hand, since all $X_j$ random variables follow the same distribution~(exponential distribution with unit mean) and as all $\lambda_j^{\Sigma}$ values are equal to one, therefore all the $\lambda_j^{G}$ values should be assigned the same power, and so, let us assume ${\lambda_1^{G}=\lambda_2^{G}=\cdots =\lambda_{n_\text{R}}^{G} =\lambda_{\text{eq}}^{G}}$; consequently, the random variable $X$ will be written as
\begin{eqnarray}
\label{eq:X-NC}
X_{NC} = \frac{\lambda_{\text{eq}}^{G} \sum_{j=1}^{n_\text{R}}X_j }{1+ \lambda_{\text{eq}}^{G} \sum_{j=1}^{n_\text{R}}X_j}=\frac{V}{1+ V}
\end{eqnarray}
where $V$ follows the Erlang distribution with rate $\frac{1}{\lambda_{\text{eq}}^{G}}$ and shape~$n_\text{R}$.
By substituting $X_{NC}$ in \eqref{eq:hR-XY}, the following expression will be derived for the cdf of $\gamma_\text{D}$ for the non correlated scenario:
\begin{thm}
Assuming no correlation at the relay, the cdf  of $\gamma_\text{D}$ is
\begin{eqnarray}
\label{eq:CDF-NC}
F_{\gamma_{\text{D}-{NC}}}(x)&=& 1-2(n_\text{S}w)^{n_\text{S}}\mathrm{e}^{-n_\text{S}w}\\
&& \hspace*{-20mm}\times\sum_{m=0}^{n_\text{R}-1}\sum_{n=0}^{n_\text{S}-1} 
\frac{n_\text{S}^m
(\lambda_{\text{eq}}^{G} n_\text{S}w)^{-\frac{m+n+1}{2}}
}{m!n!(n_\text{S}-n-1)
!w^m}
\mathrm{K}_{m-n-1}(2\sqrt{\frac{n_\text{S}w}{\lambda_{\text{eq}}^{G}}})
\nonumber
\end{eqnarray}
\end{thm}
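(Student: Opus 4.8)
The plan is to repeat, with the no-correlation law of $X$, the argument underlying Theorem~1. From \eqref{eq:hR-XY} and \eqref{eq:X-NC} we have $\gamma_\text{D}=\gamma_\text{S}\,Y\,X_{NC}$ with $X_{NC}=V/(1+V)$, where $V$ is a scaled sum of $n_\text{R}$ i.i.d.\ unit-mean exponentials and is therefore Erlang with shape $n_\text{R}$ and rate $1/\lambda_{\text{eq}}^{G}$, while $Y$ is Erlang with shape and rate both equal to $n_\text{S}$ and is independent of $V$. Since $X_{NC}\in(0,1)$ almost surely, I would condition on $Y$: letting $w$ denote the scaled threshold for which $\{\gamma_\text{D}\le x\}=\{Y X_{NC}\le w\}$ (the $w$ of \eqref{eq:CDF-NC}, once the rescaling $\gamma_\text{D}=\gamma_\text{S}Y X$ is absorbed), we get $F_{\gamma_\text{D}}(x)=\mathbb{E}_Y[F_{X_{NC}}(w/Y)]$. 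Because $F_{X_{NC}}(t)=1$ for $t\ge 1$ and $F_{X_{NC}}(t)=F_V(t/(1-t))$ for $0<t<1$, the $Y$-integral splits at $Y=w$ into a piece equal to $F_Y(w)$ plus a piece $\int_w^\infty F_V\!\bigl(\tfrac{w}{y-w}\bigr)\,f_Y(y)\,dy$.

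Next I would substitute the Erlang cdf $F_V(s)=1-e^{-s/\lambda_{\text{eq}}^{G}}\sum_{m=0}^{n_\text{R}-1}\frac{(s/\lambda_{\text{eq}}^{G})^{m}}{m!}$ with $s=w/(y-w)$; the constant term $1$ cancels against $F_Y(w)$, leaving
\[
F_{\gamma_\text{D}}(x)=1-\sum_{m=0}^{n_\text{R}-1}\frac{w^{m}}{m!\,(\lambda_{\text{eq}}^{G})^{m}}\int_w^\infty (y-w)^{-m}e^{-\frac{w}{(y-w)\lambda_{\text{eq}}^{G}}}f_Y(y)\,dy .
\]
With $f_Y(y)=\frac{n_\text{S}^{n_\text{S}}}{(n_\text{S}-1)!}\,y^{n_\text{S}-1}e^{-n_\text{S}y}$, the change of variable $u=y-w$ together with the binomial expansion of $(u+w)^{n_\text{S}-1}$ turns the inner integral into a finite sum of terms $\int_0^\infty u^{n-m}e^{-n_\text{S}u-\frac{w}{\lambda_{\text{eq}}^{G}u}}\,du$. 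Each of these is evaluated by the classical identity $\int_0^\infty u^{a-1}e^{-pu-q/u}\,du=2(q/p)^{a/2}\mathrm{K}_a(2\sqrt{pq})$ with $a=n-m+1$, $p=n_\text{S}$, $q=w/\lambda_{\text{eq}}^{G}$; the Bessel order $m-n-1$ in \eqref{eq:CDF-NC} is then simply $\mathrm{K}_{n-m+1}=\mathrm{K}_{-(n-m+1)}$. Reassembling the double sum ($m$ from the Erlang-cdf expansion, $n$ from the binomial expansion) and tidying the powers of $w$, $n_\text{S}$ and $\lambda_{\text{eq}}^{G}$ yields the claimed expression.

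I expect the only real difficulty to be bookkeeping. The two delicate points are: (i) handling the truncation $X_{NC}<1$ correctly — the split of the $Y$-integral at $Y=w$ and the cancellation that leaves the leading $1$; and (ii) tracking the exponents of $w$, $n_\text{S}$ and $\lambda_{\text{eq}}^{G}$ through the combined binomial/Erlang-cdf expansion so they line up with \eqref{eq:CDF-NC}. Everything substantive is already available: the law of $V$ is an elementary convolution, and the inner integral is the standard integral representation of $\mathrm{K}_\nu$. As a consistency check one should recover Theorem~1 by specializing to the full-correlation law of $X$, where $V$ is exponential (so the $m$-sum collapses) and $X$ carries the extra factor $n_\text{R}$, reproducing \eqref{eq:CDF-gammaD-FC} after reindexing. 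Finally, note that conditioning on $V$ first and integrating out $Y$ also works but produces Bessel functions of order $n_\text{R}-m$ rather than $m-n-1$, so conditioning on $Y$ is the order that lands directly on \eqref{eq:CDF-NC}.
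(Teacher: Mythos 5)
Your proposal follows essentially the same route as the paper: the paper proves this theorem by ``the same lines'' as Theorem~1, whose Appendix~\ref{app:SNR Distribution: Full Correlation Scenario} proof conditions on $Y$, restricts the integral to $y>w$, changes variables $t=y-w$, binomially expands $(t+w)^{n_\text{S}-1}$, and invokes \cite[3.471.9]{GrRy:2007} --- exactly your steps, the only addition being the finite Erlang-cdf sum for $V$ that produces the extra index $m$. One bookkeeping remark worth recording: carrying your exponent-tracking through to the end yields the factor $(n_\text{S}w)^{m}$ in the numerator rather than the paper's $n_\text{S}^{m}/w^{m}$, i.e.\ the $w^{-m}$ in \eqref{eq:CDF-NC} appears to be a typo for $w^{+m}$ (the two expressions coincide only at $m=0$, which is why the $n_\text{R}=1$ reduction to \eqref{eq:CDF-gammaD-FC} and \eqref{eq:single antenna} still checks out).
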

with $w = \frac{x}{\gamma_\text{S}}$.
\begin{proof}
Following the same lines of the proof for \eqref{eq:CDF-gammaD-FC}, one can easily prove \eqref{eq:CDF-NC} too. 
\end{proof}

Assuming single antennas at the source and the relay nodes, i.e., ${n_\text{S} = n_\text{R} =1}$, the MIMO scenario of this paper reduces to a conventional single antenna relay system wherein the relay node has access to the variance of its channels; consequently, as expected, the two expressions in \eqref{eq:CDF-gammaD-FC} and \eqref{eq:CDF-NC} are identical according to 
\begin{eqnarray}
\label{eq:single antenna}
F_{\gamma_{\text{D}}}(x)=
1-2\sqrt{\frac{(1+\gamma_\text{S})x}{\gamma_\text{S}\gamma_\text{R}}}
\mathrm{e}^{- \frac{x}{\gamma_\text{S}}}
\mathrm{K}_{1}(2\sqrt{\frac{(1+\gamma_\text{S})x}{\gamma_\text{S}\gamma_\text{R}}})
\end{eqnarray} 
Note that $F_{\gamma_{\text{D}}}(x)$ for a single antenna scenario similar to \eqref{eq:single antenna} has been reported in numerous papers including, e.g.,  \cite{HaAl:2003ICASSP}.

As discussed earlier, $F_{\gamma_\text{D}}$~(and $f_{\gamma_\text{D}}$) for arbitrary correlation  should follow an expression that at the extreme case reduces to \eqref{eq:CDF-NC} and \eqref{eq:CDF-gammaD-FC}.

On the other hand, statistical characteristics of $X$ are also of essential importance for understanding the distribution $\gamma_\text{D}$.  Assuming full correlation and no correlation at the relay, characterizing $X$ was simple, however, characterizing $X$ with arbitrary correlation is, actually, more complicated and will be derived in the next section.

\subsection{Statistical Characteristics of $X$ for Arbitrary Correlation}
\label{Sec:PDF-X}

The statistical characteristics of the random variable $X$ are not studied in the literature but will be derived in this paper. 

\begin{figure}[t]
\begin{center}
        \includegraphics[width=0.43\textwidth, height =0.40 \textwidth]{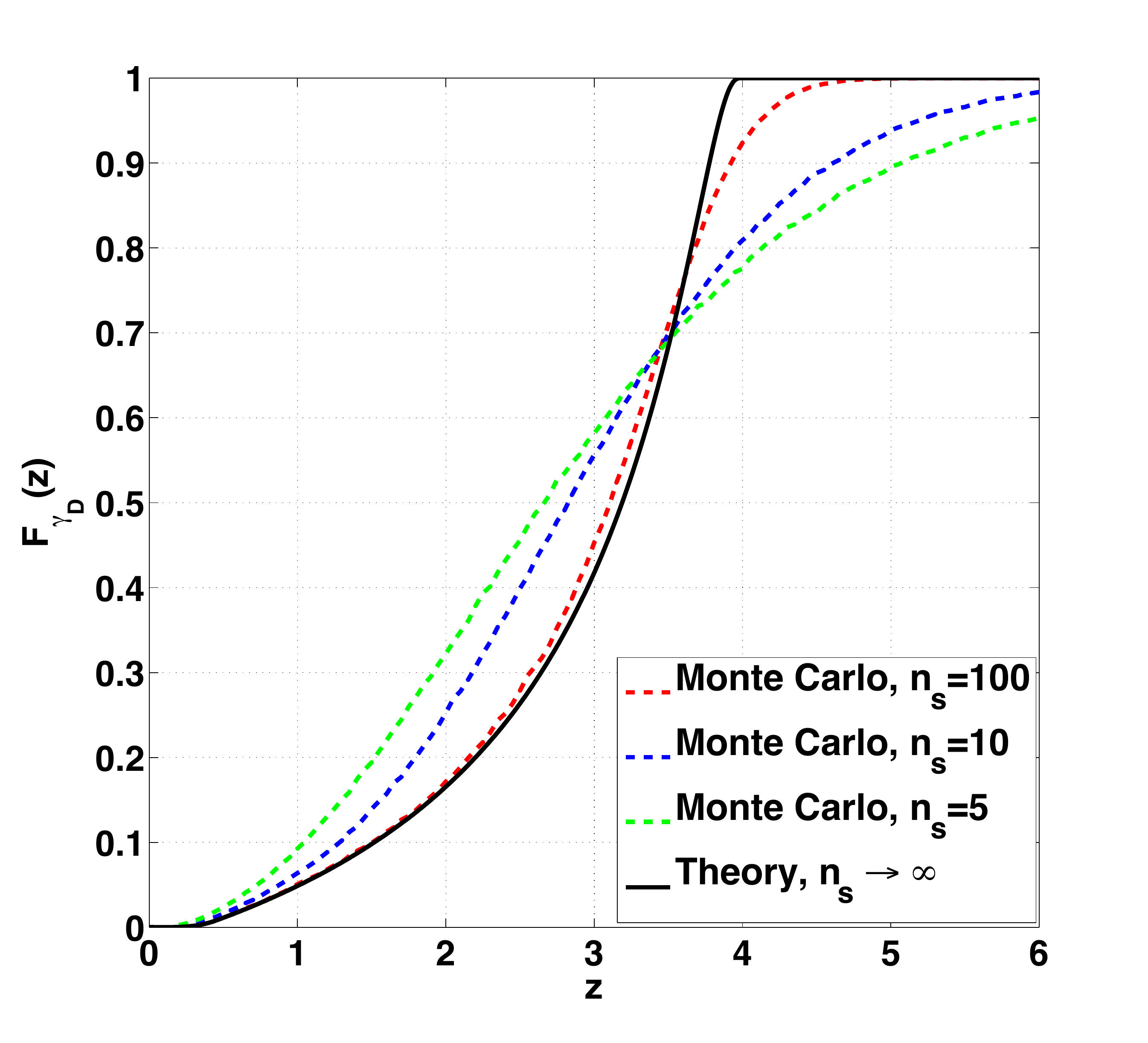}
\end{center} 
\caption{{ Approximation of $F_{\gamma_D}(z)$ as obtained in \eqref{thm:F-gamma=F-X} where $\lambda_1^{\Sigma}=4$, 
$\lambda_2^{\Sigma}=1$ and 
$\lambda_1^{G}=\lambda_2^{G}=\lambda_\text{eq}^{G}=2$.} }  
\label{fig:CDF}
\end{figure} 
Considering that ${\lambda^{\Sigma}_{1}\ge \lambda^{\Sigma}_{2}\ge \cdots \geq \lambda^{\Sigma}_{\kappa}\geq 0}$, the cdf of $X$ in (\ref{eq:X}) is 
\begin{eqnarray}
\label{eq:cdf-X}
F_X(x) = \begin{cases}
0& \text{$x\leq 0$}\\
1-\sum\limits_{i=1}^{j}\prod\limits_{{m=1}\atop {m\neq j}}^{\kappa} \frac{c_m}{c_m - c_j}\mathrm{e}^{c_i x} & \lambda^{\Sigma}_{j+1} < x < \lambda^{\Sigma}_{j}\\
1& \text{$\lambda^{\Sigma}_{1} \leq x$}
\end{cases} 
\end{eqnarray}
where 
\begin{equation}
\label{eq:c-j}
c_j=\frac{1}{\lambda^{G}_j\lambda^{\Sigma}_j(\lambda^{\Sigma}_j - x)}.
\end{equation}
A sketch of the proof for  \eqref{eq:cdf-X} is provided in Appendix~\ref{app:cdfX}.
To provide a better understanding of the statistical characteristics of the random variable $X$, the $F_X(x)$ for ${\kappa =2\text{ and }3}$ is provided in the following. Assuming $\kappa=2$, we have 
 \begin{eqnarray}
\label{eq:cdf-kappa=2}
F_X(x) = \begin{cases}
0 &x\leq 0\\
1-\frac{c_2}{c_2-c_1}\mathrm{e}^{c_1 x} - \frac{c_1}{c_1-c_2}\mathrm{e}^{c_2 x}
 & 0 < x < \lambda^{\Sigma}_{2}\\
1-\frac{c_2}{c_2-c_1}\mathrm{e}^{c_1 x}
 & \lambda^{\Sigma}_{2} < x < \lambda^{\Sigma}_{1}\\
1& x \geq \lambda^{\Sigma}_{1}
\end{cases} 
\end{eqnarray}
and for $\kappa=3$, $F_X(x)$ is derived in (\ref{eq:cdf-kappa=3}) at the top of next page.
\newcounter{mytempeqncnt}
\begin{figure*}[!t]

\normalsize
\setcounter{mytempeqncnt}{\value{equation}}

 \begin{eqnarray}
\label{eq:cdf-kappa=3}
F_X(x) = \begin{cases}
0&x\leq 0\\
1-\frac{c_2 c_3}{(c_2-c_1)(c_3-c_1)}\mathrm{e}^{c_1 x} - \frac{c_1 c_3}{(c_3-c_2)(c_1-c_2)}\mathrm{e}^{c_2 x}
-\frac{c_1 c_2}{(c_1-c_3)(c_2-c_3)}\mathrm{e}^{c_3 x} &0 < x < \lambda^{\Sigma}_{3}
\\
1-\frac{c_2 c_3}{(c_2-c_1)(c_3-c_1)}\mathrm{e}^{c_1 x} - \frac{c_1 c_3}{(c_3-c_2)(c_1-c_2)}\mathrm{e}^{c_2 x} &\lambda^{\Sigma}_{3}< x < \lambda ^{\Sigma}_{2}\\
1-\frac{c_2 c_3}{(c_2-c_1)(c_3-c_1)}\mathrm{e}^{c_1 x} &\lambda^{\Sigma}_{2}< x < \lambda^{\Sigma}_{1}\\
1& x \geq \lambda^{\Sigma}_{1}
\end{cases} 
\end{eqnarray}
\hrulefill
\vspace*{4pt}
\end{figure*}
One can easily derive ${f_X(x)}$ by taking the derivative of ${F_{X}(x)}$ with respect to $x$, i.e.,
\begin{eqnarray}
\label{eq:PDF-X}
f_{X}(x) = \frac{\mathrm{d}F_{X}(x)}{\mathrm{d}x}
\end{eqnarray}
 which, is a straightforward simple derivation practice.  Fig.~\ref{fig:PDF_X} illustrates the ${f_{X}(x)}$ assuming 
 $\lambda^G_j =1$ for various values of $\lambda^\Sigma_j$. The agreement between the theoretical and Monte Carlo simulations validates the correctness of the calculations. 
\begin{figure}[t]
\begin{center}
        \includegraphics[width=0.45\textwidth, height =0.41 \textwidth]{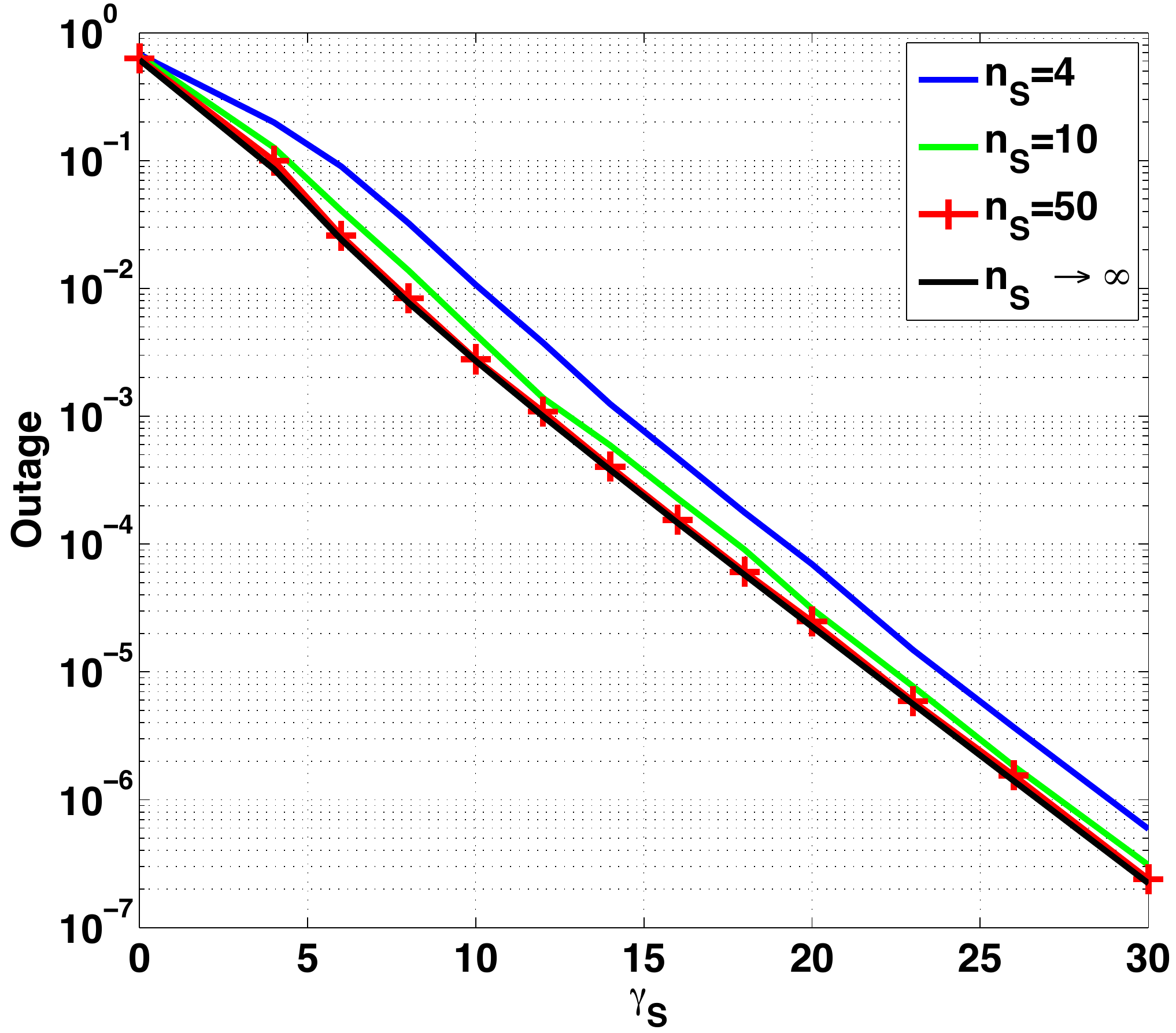}
\end{center} 
\caption{Outage probability approximated by  \eqref{thm:F-gamma=F-X}~($n_\text{S}\to \infty$) in comparison with the Monte Carlo simulations for various values of $n_\text{S}$.
 }
\label{fig:outage}
\end{figure} 
\subsection{Statistical Characteristics of $\gamma_\text{D}$ Assuming Infinite Antennas at the Source}
From the discussions provided in the previous section,
it is clear that ${\gamma_\text{D} = \gamma_\text{S}YX}$ where $Y$ follows an Erlang distribution with rate and shape parameters equal to $n_\text{S}$, i.e., ${f_{Y}(y)= \frac{n_\text{S}^{n_\text{S}}}{({n_\text{S}}-1)!}y^{n_\text{S}-1}\mathrm{e}^{-n_\text{S}y}}$  and $X$ follows a distribution as derived in (\ref{eq:cdf-X}) and (\ref{eq:PDF-X}). 
In order to calculate the cdf of $\gamma_\text{D}$, one should calculate the following integral
\begin{eqnarray}
\label{eq:cdf_gammaD-integral}
F_{\gamma_\text{D}}(z) &=& \mathbb{P}\{\gamma_\text{D}<z \}=\mathbb{P}\{\gamma_\text{S}YX<z \}\nonumber \\
&=&\int_{0}^{\infty}\underbrace{\mathbb{P}\{X<\frac{z}{y\gamma_\text{S}}\}}_{F_X(\frac{z}{y\gamma_\text{S}})}f_Y(y)\mathrm{d}y.
\end{eqnarray}
On the other hand, (\ref{eq:cdf_gammaD-integral}) does not lend itself easily to further calculations, and to the best of our knowledge, the integral cannot be solved with the existing table of integrals~(e.g., \cite{GrRy:2007}). However, clearly, $F_{\gamma_\text{D}}(z)$ is a multipartite function because $X$ has a multipartite cdf; to the best of our knowledge, a multipartite $F_{\gamma_\text{D}}(z)$ in the context of AF cooperative systems has not been reported in the literature and so it is observed for the first time in this paper\footnote{ We use multipartite function to refer to a function that involves several distinct functions for different domains~(e.g., see \eqref{eq:cdf-kappa=2} and \eqref{eq:cdf-kappa=3}). Note that being multipartite is not considered to be advantage~(or disadvantage) for the system but stressing on the novelty of the statistical characteristics of the SNR, in the context of AF cooperative systems, is meant to highlight the need for further investigation on the problem.}. 
The following theorem provides an approximate $F_{\gamma_\text{D}}(z)$ for the given system model:
\begin{thm}
For large $n_\text{S}$, $F_{\gamma_\text{D}}(z)$ can be approximated by
\begin{equation}
\label{thm:F-gamma=F-X}
F_{\gamma_\text{D}}(z) \approx F_{X}(\frac{z}{\gamma_\text{S}})
\end{equation}
\end{thm}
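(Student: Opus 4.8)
The plan is to exploit the fact that, as $n_\text{S}\to\infty$, the Erlang random variable $Y$ degenerates to the constant $1$. Recall from \eqref{eq:hR-XY} that $\gamma_\text{D}=\gamma_\text{S}\,Y\,X$, where $X$ and $Y$ are \emph{independent} (they are functions of the R--D and S--R channels respectively), $X$ is bounded with $0\le X\le\lambda_1^\Sigma$ and has the multipartite cdf \eqref{eq:cdf-X}, and $Y=\frac{1}{n_\text{S}}\sum_{i=1}^{n_\text{S}}|h_{1w,i}|^2$ is Erlang with rate and shape $n_\text{S}$, so that $\mathbb{E}\{Y\}=1$ and $\mathrm{Var}\{Y\}=1/n_\text{S}$. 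First I would apply Chebyshev's inequality (the weak law of large numbers): for any $\epsilon>0$, $\mathbb{P}(|Y-1|>\epsilon)\le 1/(n_\text{S}\epsilon^2)\to 0$, so $Y\to 1$ in probability. Since $X$ is a fixed random variable, $\gamma_\text{D}=\gamma_\text{S}YX$ then converges in probability, hence in distribution, to $\gamma_\text{S}X$, and therefore $F_{\gamma_\text{D}}(z)\to\mathbb{P}(\gamma_\text{S}X<z)=F_X(z/\gamma_\text{S})$ at every continuity point $z$ of $F_X(\cdot/\gamma_\text{S})$, which is precisely \eqref{thm:F-gamma=F-X}.

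A more self-contained route, which I would probably present, works directly from the integral \eqref{eq:cdf_gammaD-integral}: $F_{\gamma_\text{D}}(z)=\int_0^\infty F_X\!\big(z/(y\gamma_\text{S})\big)f_Y(y)\,\mathrm{d}y$. The densities $f_Y(y)=\frac{n_\text{S}^{n_\text{S}}}{(n_\text{S}-1)!}y^{n_\text{S}-1}\mathrm{e}^{-n_\text{S}y}$ form an approximate identity that concentrates at $y=1$: their mean is $1$ and, by the Chebyshev bound above, their mass outside any neighbourhood $|y-1|\le\delta$ tends to $0$. Splitting the integral at $|y-1|\le\delta$ and $|y-1|>\delta$, the integrand is bounded by $1$ everywhere (it is a cdf), so the far part contributes at most $\mathbb{P}(|Y-1|>\delta)\to 0$; on the near part, $F_X(z/(y\gamma_\text{S}))$ differs from $F_X(z/\gamma_\text{S})$ by at most the modulus of continuity of $F_X$ at $z/\gamma_\text{S}$, which can be made arbitrarily small by first choosing $\delta$ small (using that $z/\gamma_\text{S}$ is a continuity point of $F_X$). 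Letting $n_\text{S}\to\infty$ and then $\delta\to 0$ yields the claim; equivalently, this is dominated convergence against the weak limit $\delta_1$ of the laws of $Y$.

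The only genuine subtlety is that $F_X$ is a multipartite function whose pieces meet at the breakpoints $\lambda_1^\Sigma,\dots,\lambda_\kappa^\Sigma$; strictly, \eqref{thm:F-gamma=F-X} should be read as an asymptotic identity ($n_\text{S}\to\infty$) holding at continuity points of $F_X$, i.e.\ for all $z\notin\{\gamma_\text{S}\lambda_j^\Sigma\}$, a set of Lebesgue measure zero and hence immaterial for outage evaluation. The \emph{rate} at which the approximation improves with $n_\text{S}$ is governed by $\mathrm{Var}\{Y\}=1/n_\text{S}$ together with the local smoothness of $F_X$ near $z/\gamma_\text{S}$, so I would also remark that the approximation is expected to be accurate already for moderate $n_\text{S}$, as confirmed by the Monte Carlo curves in Fig.~\ref{fig:outage}. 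I do not expect any obstacle beyond this bookkeeping: the heart of the argument is just the elementary concentration of $Y$ about its mean.
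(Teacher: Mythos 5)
Your proposal is correct and follows essentially the same route as the paper's own proof: the entire argument rests on the concentration of $Y$ about its mean $1$ as $n_\text{S}\to\infty$, followed by setting $Y=1$ in \eqref{eq:cdf_gammaD-integral}. Your version is simply a rigorous rendering of that idea --- replacing the paper's loose appeal to the central limit theorem with the more apt Chebyshev/weak-law argument, and correctly flagging the continuity-point caveat at the breakpoints $z=\gamma_\text{S}\lambda_j^{\Sigma}$ that the paper glosses over --- so no further comparison is needed.
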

 \begin{proof}
 The random variable ${Y}$ in the previous sections, e.g. in (\ref{eq:hR-XY}), follows an Erlang-distribution; indeed, ${Y}$ is the sum of ${n_\text{S}}$ exponential random variables, each with parameter ${n_\text{S}}$~(see (\ref{eq:Xj})). Using the central limit theorem (see \cite[Ch. 7.4]{PaPi:2002}), the random variable ${Y}$ can be approximated by a Gaussian distribution
with mean equal to ${1}$ and variance ${1/n_\text{S}}$, i.e., approximately,
${Y \sim \mathcal{N}(1,1/n_\text{S}}$). Assuming large ${n_\text{S}}$ (i.e., ${n_\text{S} \to \infty}$), one can easily deduce ${Y\to 1}$. By setting $Y=1$ in (\ref{eq:cdf_gammaD-integral}), one can write 
${F_{\gamma_\text{D}}(z) = \mathbb{P}\{\gamma_\text{D}<z \}=\mathbb{P}\{X<\frac{z}{\gamma_\text{S}} \}}$ and so (\ref{thm:F-gamma=F-X}) is proved.
 \end{proof}
{Fig.~\ref{fig:CDF} and Fig.~\ref{fig:outage} are intended to validate the precision of the approximation obtained in \eqref{thm:F-gamma=F-X}. In  
Fig.~\ref{fig:CDF} an illustration of $F_{\gamma_D}(z)$ is provided; clearly, Monte Carlo simulations approximate theoretical $F_{\gamma_D}(z)$ when $n_\text{S}$ is large. Moreover, considering that
the cdf $F_{\gamma_\text{D}(z)}$ in (\ref{thm:F-gamma=F-X}) corresponds to the outage probability for large $n_\text{S}$~(ideally for ${n_\text{S}\to \infty}$), in Fig.~\ref{fig:outage} we plot outage probability versus transmit power at the source node using (\ref{thm:F-gamma=F-X}) and also using Monte Carlo simulations for various values of $n_\text{S}$. It is clear that for large values of $n_\text{S}$, the closed form expression for the outage probability approximates the Monte Carlo simulations with high accuracy. Nevertheless, for smaller values of the $n_\text{S}$, although the approximation is not accurate, it provides a reasonable approximation.}

As mentioned in \ref{Sec:Contribution}, one of the main 
objectives in this paper is to allocate available power in 
the relay according to (\ref{eq:RelayPowConstraintLambda}) 
among different ${\lambda_j^{G}}$ variables so that the 
ergodic capacity in (\ref{eq:ErgodicCapactityDefinition}) is 
maximized. In fact, for plotting Fig.~\ref{fig:PDF_X} we 
arbitrarily assumed ${\lambda_1^{G}=\lambda_2^{G}=1}$; 
however, such a random power allocation to ${\lambda_1^{G}}$ 
and ${\lambda_2^{G}}$ does not guarantee that the 
maximization problem in 
(\ref{eq:ErgodicCapactityDefinition}) is solved. On 
the other hand, it was observed in \eqref{eq:cdf_gammaD-integral} that the $F_X(\cdot)$ expression derived in \eqref{eq:cdf-X} is 
too complicated to lend itself to further mathematical 
calculations. Indeed, we do not know any closed form 
expression for the objective function 
$\mathbb{E}\{C(\cdot)\}$ which can be used for calculating optimal ${\lambda_j^{G}}$ 
values in (\ref{eq:ErgodicCapactityDefinition}). Furthermore, not only do we not know 
any analytical way for calculating optimal ${\lambda_j^{G}}$ values, we are not aware 
of any numerical method to calculate optimal ${\lambda_j^{G}}$ values. {For the 
purpose of comparison, exhaustive search over various discrete values  of the achievable rates is used as a benchmark. The 
rates for the exhaustive search are obtained by assigning various amount of the power 
to the eigenmodes according to the power constraint in 
\eqref{eq:RelayPowConstraintLambda}; {moreover, the resolution of the exhaustive search is kept adequately small~($0.1$~dB) to ensure accurate approximation. Note that resolutions larger than $0.1$~dB also provide accurate results, however, to make sure that no local maximum is missed, we use the the resolution of $0.1$~dB throughout the paper when exhaustive search is provided for comparison.}

  Although the proposed method in the next section is simple and straightforward, it will be revealed that the obtained values for ${\lambda_j^{G}}$s lead to the reasonable rates that are indistinguishable from the benchmark rates. Also, it will be revealed that the proposed method significantly reduces the computationally expensive calculations due to exhaustive search for finding optimal  ${\lambda_j^{G}}$ values in real time practical communication systems.
  
\section{Two Antenna Relay}
\label{Sec:Two Antennas in the Relay}

For simplicity, as an initial step, let us assume a system with two antennas at the relay~(i.e.~${n_\text{R}=2}$) where
${
\boldsymbol{\Sigma} = 
[{\substack{1\;\; \;\rho \atop
\rho^* \;1}}]
}$; the parameter $\rho$ indicates the correlation coefficient. As there are only two eigenvectors corresponding to $\boldsymbol{\Sigma}$, the problem of  optimal power allocation reduces to calculating the optimal values of ${\lambda_1^G}$ and ${\lambda_2^G}$, given the power constraint in (\ref{eq:RelayPowConstraintLambda}).

In \cite[Eq.~34]{MoGo:2014TWC}, we derive a necessary and sufficient condition under which transmission \emph{only} from the largest eigenvector (the eigenvector corresponding to ${\lambda_1^{\Sigma}}$) achieves capacity. For ease of reference, \cite[Eq.~34]{MoGo:2014TWC} is provided in the following lemma:

\emph{\textbf{Lemma}: Transmission from the largest eigenvector achieves capacity if}
\begin{eqnarray}
\label{eq:LER-Constraint}
{\lambda}_{2}^{\Sigma} \leq
\frac{(\alpha_1+P_1\lambda^{\Sigma}_2)\mathcal{D}(\lambda^{\Sigma}_1,P_1) -\alpha_1\mathbb{E}\lbrace\frac{1}{1+P_1Z_1} \rbrace   }
{ P_1\mathbb{E}\lbrace \frac{1+\lambda^{\Sigma}_2\gamma  Y}{1+P_1
Z_1}\rbrace}
\end{eqnarray} 
with
\begin{eqnarray}
\label{eq:ZandD}
Z_1&=&{\lambda}_{1}^{\Sigma} ( 1+\gamma  {\lambda}_{1}^{\Sigma} Y)X_1\\
\mathcal{D}({\lambda}_{1}^{\Sigma},P_1)&=&\frac{1}{P_1{\lambda}_{1}^{\Sigma}}\Gamma(0,\frac{1}{P_1{\lambda}_{1}^{\Sigma} })\mathrm{e}^{\frac{1}{P_1{\lambda}_{1}^{\Sigma} }}
\end{eqnarray}  
where $P_1=\frac{P_\text{R}}{\lambda^{\Sigma}_1P_\text{S}+N_0}$ and  $\alpha_1=\frac{\lambda^{\Sigma}_2P_\text{S}+N_0}{\lambda^{\Sigma}_1P_\text{S}+N_0}$.
 Although \cite{MoGo:2014TWC} proves that LER is the optimal transmission method at low SNR, it does not discuss any method to distribute the available power at the relay among ${\lambda_j^G}$ variables when (\ref{eq:LER-Constraint}) does not hold. \emph{This problem is addressed in the rest of the paper.}
\begin{figure}[t]
\begin{center}
        \includegraphics[width=0.43\textwidth, height =0.41 \textwidth]{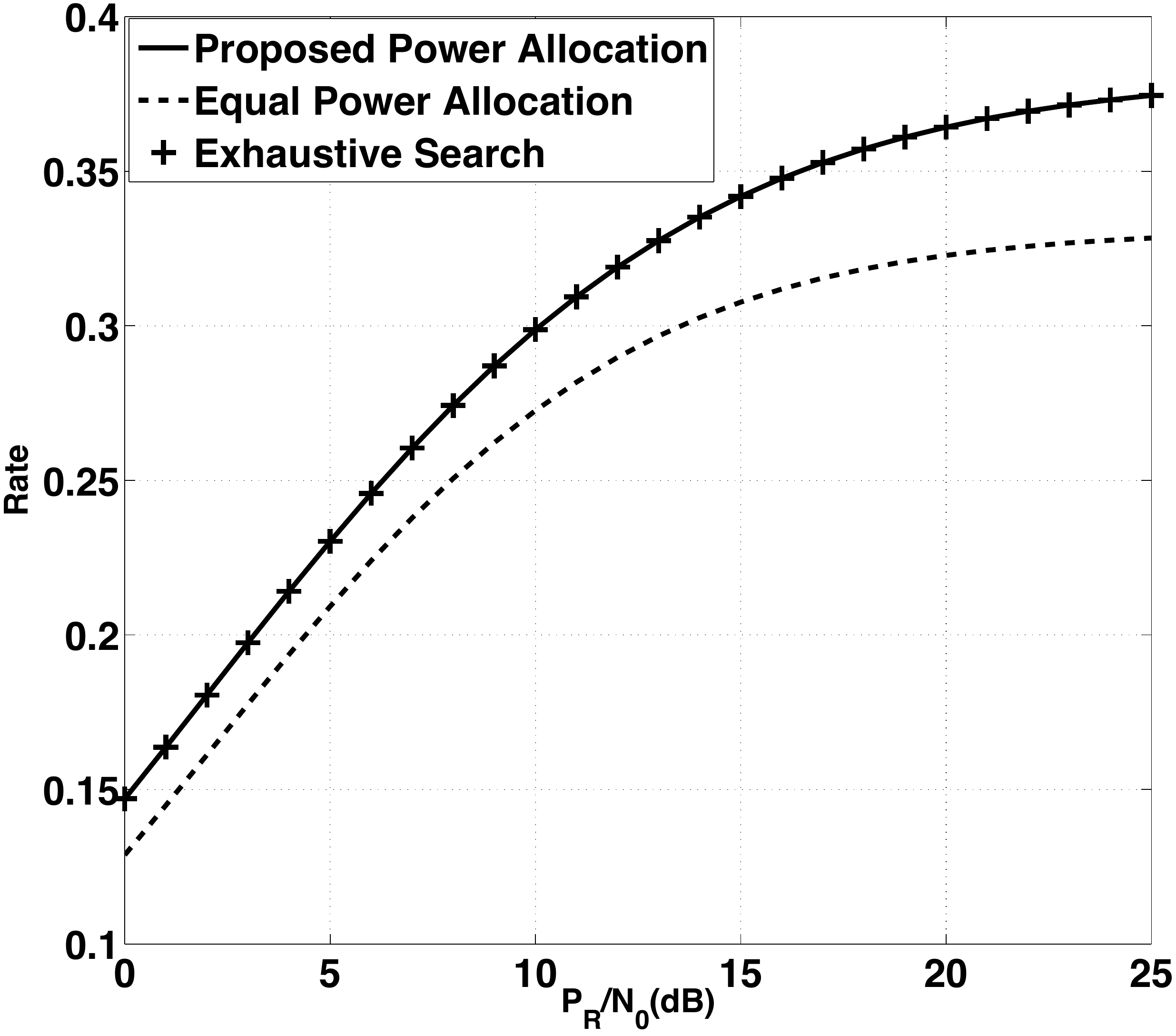}
\end{center} 
\caption{ {Rate vs. transmit power at the relay. The rate values are obtained using the proposed power allocation algorithm, equal power allocation and the benchmark. $n_\text{S} = n_\text{R}=2$, ${N_0 = 1}$, ${P_\text{S}=0~\text{dB}}$, and inter-antenna correlation ${\rho = 0.3}$. The exhaustive search is performed over discrete values of transmission rates that were obtained using various values of ~${\lambda^G_1\text{ and }\lambda^G_2}$~(step size $0.1$~dB) that fulfil~(\ref{eq:RelayPowConstraintLambda}).}
 }
\label{fig:C_2RelAnt}
\end{figure} 
\paragraph*{\textbf{Proposition}}
 Given  $\lambda_1^{\Sigma}$, $\lambda_2^{\Sigma}$, $P_\text{S}$ and $P_\text{R}$ , if  (\ref{eq:LER-Constraint}) holds, allocate the entire power in the relay only to ${\lambda_1^G}$ as
\begin{eqnarray}
\label{eq:LowSNR-Lambda1G-2antenna}
\lambda_1^G = \frac{P_\text{R}}{N_0+P_\text{S}\lambda_1^{\Sigma}}
\end{eqnarray} 
and set ${\lambda_2^{G}=0}$~(i.e., LER), otherwise, when (\ref{eq:LER-Constraint}) does \emph{not} hold, we propose to allocate power per eigenvector proportionally to the strength of the eigenmodes, i.e.,
\begin{eqnarray}
\label{eq:proportional power}
\frac{\lambda_1^G}{\lambda_2^G} =\frac{\lambda_1^{\Sigma}}{\lambda_2^{\Sigma}}.
\end{eqnarray}
 Consequently, one can assume ${\lambda_1^G =\lambda_1^{\Sigma} g_2}$ and ${\lambda_2^G =\lambda_2^{\Sigma} g_2}$
with $g_2$ obtained from \eqref{eq:RelayPowConstraintLambda} as
\begin{eqnarray}
\label{eq:g2}
g_2=\frac{ P_\text{R}}{P_\text{S}(\lambda_1^{\Sigma 2} + \lambda_2^{\Sigma 2}) + N_0(\lambda_1^{\Sigma} + \lambda_2^{\Sigma })}.
\end{eqnarray}  
Note that \eqref{eq:LowSNR-Lambda1G-2antenna} and \eqref{eq:g2} are obtained from the relay power constraint in  (\ref{eq:RelayPowConstraintLambda}).\hfill $\blacksquare$

{\textbf{\textit{Remark:}} The motivation for assuming proportional power allocation in \eqref{eq:proportional power} arises from the limit behaviour of the correlation coefficients. In the case where $\lambda_1^{\Sigma}$ much larger than $\lambda_2^{\Sigma}$~(i.e., $\lambda_1^{\Sigma}\gg \lambda_2^{\Sigma}$) clearly LER will be optimal transmission method and so $\lambda_1^{G}\gg \lambda_2^{G}$ must hold, and this is guaranteed by \eqref{eq:proportional power}. On the other hand when $\lambda_1^{\Sigma}$ and $\lambda_2^{\Sigma}$ are only slightly different, both $\lambda_1^{G}$ and $\lambda_2^{G}$ should be assigned relatively equal power; indeed, when $\lambda_1^{\Sigma}=\lambda_2^{\Sigma}$, the fading is uncorrelated and so, as described in Subsection \ref{Sec:No Correlation}, $\lambda_1^{G}=\lambda_2^{G}=\lambda_\text{eq}^{G}$,
 which again is guaranteed by \eqref{eq:proportional power}.Note that the conjecture will be validated in the following by simulations which show that the result is nearly identical to that with power allocation by exhaustive search. }

 {Fig.~\ref{fig:C_2RelAnt} illustrates the transmission rates of a cooperative system with $n_\text{S}=n_\text{R}=2$ when the power allocation is carried out using the proposed algorithm. For comparison, maximum transmission rates corresponding to the benchmark are also illustrated.} The figure clearly shows a good agreement between exhaustive search~(i.e., the benchmark) and also the simple proposed algorithm.   
The difference between the proposed algorithm and the benchmark is, in fact, indistinguishable. Fig.~\ref{fig:C_2RelAnt} shows the transmission rate assuming equal power allocation in the relay. Note  that equal power transmission is equivalent to ignoring the knowledge of correlation at the relay. Clearly, the proposed algorithm significantly outperforms equal power transmission.

With two antennas at the relay, Fig.~\ref{fig:C_2RelAnt} shows that the proposed algorithm leads to excellent results. In the next section, the proposed algorithm is extended for a system with three antennas at the relay.

\section{Three Antenna Relay} 
\label{Sec:Three Antennas in the Relay}
Assuming three antennas at the relay, it is clear that according to the values of $\lambda_1^{\Sigma}$, $\lambda_2^{\Sigma}$, $\lambda_3^{\Sigma}$, $P_\text{R}$ and $P_\text{S}$, capacity optimal transmission can lead to three different scenarios:
\begin{itemize}
\item\textbf{\textit{Case 1:}} Transmission only via the largest eigenmode achieves capacity (i.e. transmission via the eigenvectors corresponding to $\lambda_1^{\Sigma}$ ); or equivalently, LER is the capacity-optimal transmission method. This scenario will occur only when  \eqref{eq:LER-Constraint} holds. In this case ${\lambda_1^G>0}$ and  ${\lambda_2^G=\lambda_3^G=0}$.
\item\textbf{\textit{Case 2:}} Transmission only via the two largest eigenmodes achieves capacity (i.e. transmission via the eigenvectors corresponding to $\lambda_1^{\Sigma}$ and $\lambda_2^{\Sigma}$); or equivalently, $2$-LER is the capacity optimal transmission method. In this case ${\lambda_1^G>0}$, ${\lambda_2^G>0}$ and ${\lambda_3^G=0}$.
\item\textbf{\textit{Case 3:}} Transmission via all three eigenmodes achieves the capacity; or equivalently, 3-LER is the capacity optimal transmission method. In this case ${\lambda_1^G>0}$, ${\lambda_2^G>0}$ and ${\lambda_3^G>0}$.
\end{itemize}

Note that  \eqref{eq:LER-Constraint} specifies the LER-optimal region. In the following, we intend to specify necessary and sufficient conditions under which, assuming proportional power allocation, transmission only via the two largest eigenmodes in the relay ($2$-LER) approaches capacity.  {Note that proportional power allocation is motivated by the precision of the algorithm introduced in Section~\ref{Sec:Two Antennas in the Relay}. We emphasis that the optimality of $n$-LER is conditioned on proportional power allocation and so it is sub-optimal, however, the results are acceptable when compared with the benchmark\footnote{ {Our conjecture is that the benchmark transmission rate obtained using exhaustive search is, virtually, equivalent to optimal transmission rate.}}.}

\subsection{Conditional Optimality of 2-LER} 
\label{Sec:Optimality of Two Largest Eigenmode Relaying: Necessary Condition}

As discussed in Section~\ref{Sec:Two Antennas in the Relay}, when LER is not optimal, transmission via the two largest eigenmodes, with proportionally assigned power, approximates the benchmark with reasonable accuracy; therefore, by setting
\begin{eqnarray}
\label{eq:L1G=L2G=LeqG}
\lambda_1^G =\lambda_1^{\Sigma}g_2\text{\; and\;} \lambda_2^G = \lambda_2^{\Sigma}g_2
\end{eqnarray}
we aim to derive a necessary and sufficient condition under which, the maximization problem will be achieved by setting ${\lambda_3^G=0}$ and assigning the available power in the relay, proportionally, to  ${\lambda_1^G}$ and ${\lambda_2^G}$.

Considering that $\lambda_1^G\geq \lambda_2^G\geq \cdots \geq\lambda_{n\text{R}}^G$, one can easily conclude that if $\lambda_3^G>0$ leads to rate loss, then all the available power in the relay must be assigned only to ${\lambda_1^G\text{ and } \lambda_2^G }$  and consequently ${\lambda_j^G = 0}$ for ${j\geq 3}$. Let us assume that from the entire available power in the relay, ${\epsilon>0}$ is assigned to $\lambda_3^G$ and, motivated by the results of the two antenna relay scenario in Section \ref{Sec:Two Antennas in the Relay}, the rest of the power is proportionally distributed between $\lambda_1^G$ and $\lambda_2^G$. It is easy to conclude from ${
\frac{\Delta C_{av}(\lambda_3^G)}{\Delta \lambda_3^G} = \frac{C_{av}(\lambda_3^G=\epsilon) - C_{av}(\lambda_3^G=0)}{\epsilon}\leq 0
}$ that ${C_{av}(\lambda_3^G=0) \geq C_{av}(\lambda_3^G=\epsilon)}$; therefore, assigning $\epsilon$ power to $\lambda_3^G$ will cause a rate loss. Now, let us assume that $\epsilon \to 0$, consequently, $\frac{\Delta C_{av}(\lambda_3^G)}{\Delta \lambda_3^G}\leq 0$
 is equivalent to $\frac{\partial C_{av}(\lambda_3^G)}{\partial \lambda_3^G}|_{\lambda_3^G\to 0} \leq 0 $. Therefore, $\frac{\partial C_{av}(\lambda_3^G)}{\partial \lambda_3^G}|_{\lambda_3^G\to 0}\leq 0 $ specifies a region in which assigning power to $\lambda_3^G$~(and consequently $\lambda_j^G\text{ for} j\geq 3$) results in rate loss, and so one must transmit only via $\lambda_1^G\text{ and }\lambda_2^G$ in this region. 

According to the power constraint in \eqref{eq:RelayPowConstraintLambda}, and assuming that ${\lambda_1^G=\lambda_1^{\Sigma}g_2}$ and ${\lambda_2^G=\lambda_2^{\Sigma}g_2}$, one can calculate the power assigned to $\lambda_1^G$ and $\lambda_2^G$ by calculating $g_2$ in \eqref{eq:g2} as 
\begin{eqnarray}
\label{eq:Power-P}
g_2= P_2-\alpha_2 \lambda^G_3
\end{eqnarray}
with 
\begin{eqnarray}
\label{eq:P2}
P_2=\frac{P_\text{R}}{P_\text{S}(\lambda_1^{\Sigma 2} + \lambda_2^{\Sigma 2})+ N_0(\lambda_1^{\Sigma} + \lambda_2^{\Sigma})}
\\
\label{eq:alpha2}
\alpha_2 = \frac{
P_\text{S}\lambda_3^{\Sigma}
 + N_0}{P_\text{S}(\lambda_1^{\Sigma 2} + \lambda_2^{\Sigma 2})+ N_0(\lambda_1^{\Sigma} + \lambda_2^{\Sigma})}
\end{eqnarray}
where the index of $P_2$ and $\alpha_2$ indicates that the $2$-LER condition is being considered. Substituting \eqref{eq:hR-XY} and \eqref{eq:Power-P} in \eqref{eq:ErgodicCapactityLog-gammaD} and considering that ${\log(a/b) = \log(a)-\log(b)}$, we obtain $C(\cdot)$ in \eqref{eq:S1-S2-S3}, at the top of the next page,
where ${W_i =1+\gamma_\text{S} Y \lambda^\Sigma_i}$.
The conditional optimality region of 
$2$-LER corresponds to a region determined by $\frac{\partial C_{av}(\lambda_3^G)}{\partial \lambda_3^G}|_{\lambda_3^G\to 0}\leq 0$, that can be calculated by combining \eqref{eq:S1-S2-S3} and \eqref{eq:ErgodicCapactityDefinition} which is derived in  \eqref{eq:SecondExpectation-2LER-line1} at the top of page
\newcounter{mytempeqncnt2}
\begin{figure*}[!t]
\normalsize
\setcounter{mytempeqncnt2}{\value{equation}}

\begin{eqnarray}
\label{eq:S1-S2-S3}
C(\cdot)=\log \bigg( 
1+(P_2-\alpha_2 \lambda^G_3)
\Big(
\lambda^{\Sigma 2}_1W_1X_1
+
\lambda^{\Sigma 2}_2W_2X_2
\Big)
\bigg)-
 \log \bigg (
1+(P_2-\alpha_2 \lambda^G_3)(\lambda^\Sigma_1 X_1+\lambda^\Sigma_2 X_2) + \lambda^G_3\lambda^\Sigma_3 X_3
\bigg).
\end{eqnarray}
\hrulefill
\begin{eqnarray}
\label{eq:SecondExpectation-2LER-line1}
\frac{\partial C_{av}(\lambda_3^G)}{\partial \lambda_3^G}|_{\lambda_3^G\to 0} &=& \mathbb{E}\Big\{ \frac{\alpha_2/ P_2}{1+P_2 Z_2}\Big\}+
\mathbb{E}\Big\{ \frac{\lambda^\Sigma_3(1+\gamma_\text{S}Y\lambda^\Sigma_3)X_3}{1+P_2 Z_2}\Big\}
-\mathbb{E}\Big\{
\frac{\lambda^\Sigma_3 X_3 - \alpha_2(\lambda^{\Sigma }_1X_1 + \lambda^{\Sigma}_2X_2)}
{1+ P_2(\lambda^{\Sigma }_1X_1 + \lambda^{\Sigma }_2X_2)}
\Big\}\\
\label{eq:SecondExpectation-2LER-line2}
&=&\frac{\alpha_2}{P_2}\mathbb{E}\{ \frac{1}{1+P_2 Z_2}\}+
\lambda^\Sigma_3 \mathbb{E}\{\frac{1+\gamma_\text{S}Y\lambda^\Sigma_3}{1+P_2 Z_2}\}
-(\frac{\alpha_2}{P_2}+\lambda^{\Sigma}_3)\mathcal{D}(\lambda^{\Sigma}_{1,2},P_2)
\end{eqnarray}
\hrulefill
\vspace*{4pt}
\end{figure*}
with
\begin{eqnarray}
\label{eq:Z2}
Z_2=
\lambda^{\Sigma 2}_1W_1X_1
+\lambda^{\Sigma 2}_2W_2X_2.
\end{eqnarray} 
Note that the random variable $X_3$ at the second expectation operation on the right hand side of \eqref{eq:SecondExpectation-2LER-line1} is independent of $Z_2$ and $Y$, hence, $X_3$ can be removed in the first expectation operation because ${\mathbb{E}\{X_3\}=1}$. To the best of our knowledge, the first expectation and second expectations in \eqref{eq:SecondExpectation-2LER-line1} cannot be further simplified; however, \eqref{eq:SecondExpectation-2LER-line1} can be further simplified to \eqref{eq:SecondExpectation-2LER-line2} where
\begin{eqnarray}
\label{eq:D2}
\mathcal{D}(\lambda^{\Sigma}_{1,2},P_2)=\frac{\Gamma(0,\zeta_1)}{P_2(\lambda^{\Sigma }_1 - \lambda^{\Sigma }_2)}\mathrm{e}^{\zeta_1}
+
\frac{\Gamma(0,\zeta_2)}{P_2(\lambda^{\Sigma}_2 - \lambda^{\Sigma}_1)}\mathrm{e}^{\zeta_2}
\end{eqnarray}
with ${\zeta_i = (P_2\lambda^{\Sigma}_i)^{-1}}$. Then, the conditional optimality region of ${2\text{-LER}}$~(i.e.,  ${\frac{\partial C_{av}(\lambda_3^G)}{\partial \lambda_3^G}|_{\lambda_3^G\to 0}\leq 0}$) can be obtained by some algebraic manipulation of \eqref{eq:SecondExpectation-2LER-line2} according to
 \begin{eqnarray}
 \label{eq:2-LERopt-Final}
 \lambda_3^{\Sigma} \leq
\frac{(\alpha_2+P_2\lambda^{\Sigma}_3)\mathcal{D}(\lambda^{\Sigma}_{1,2},P_2) -\alpha_2\mathbb{E}\lbrace\frac{1}{1+P_2 Z_2} \rbrace   }
{ P_2\mathbb{E}\lbrace \frac{1+\lambda^{\Sigma}_3\gamma_\text{S}  Y}{1+P_2 Z_2}\rbrace}
 \end{eqnarray}
 Note that when $\frac{\partial C_{av}(\lambda_3^G)}{\partial \lambda_3^G}|_{\lambda_3^G\to 0} < 0$~(i.e., not including equality), the necessary condition is also sufficient, and so, the strict ``inequality" of \eqref{eq:2-LERopt-Final} specifies a necessary and sufficient condition, under which $2$-LER is the optimal transmission method. However, when equality in \eqref{eq:2-LERopt-Final} holds~(i.e., meaning that $\frac{\partial C_{av}(\lambda_3^G)}{\partial \lambda_3^G}|_{\lambda_3^G\to 0}= 0$), one should make sure that the optimum point is a maximum point; this can be done by showing 
 $\frac{\partial^2 C_{av}(\lambda_3^G)}{\partial (\lambda_3^G)^2}|_{\lambda_3^G\to 0}<0$. In \cite[App.~D]{MoGo:2014TWC}, it is proved that $\frac{\partial^2 C_{av}(\lambda_2^G)}{\partial (\lambda_2^G)^2}|_{\lambda_2^G\to 0}<0$ is always valid. Following the same lines of proof, one can, similarly, prove that $\frac{\partial^2 C_{av}(\lambda_3^G)}{\partial (\lambda_3^G)^2}|_{\lambda_3^G\to 0}<0$ and so, the inequality in  
\eqref{eq:2-LERopt-Final} is a \emph{necessary} and \emph{sufficient condition} under which the ${2\text{-LER}}$ transmission is the optimal\footnote{Optimal in the sense that we assume proportional power allocation to the two largest eigenvectors.  {In the rest of the paper all $n$-LER transmissions are conditioned on proportional power allocation}} transmission method.

\begin{figure}[t]
\begin{center}
        \includegraphics[width=0.45\textwidth, height =0.41 \textwidth]{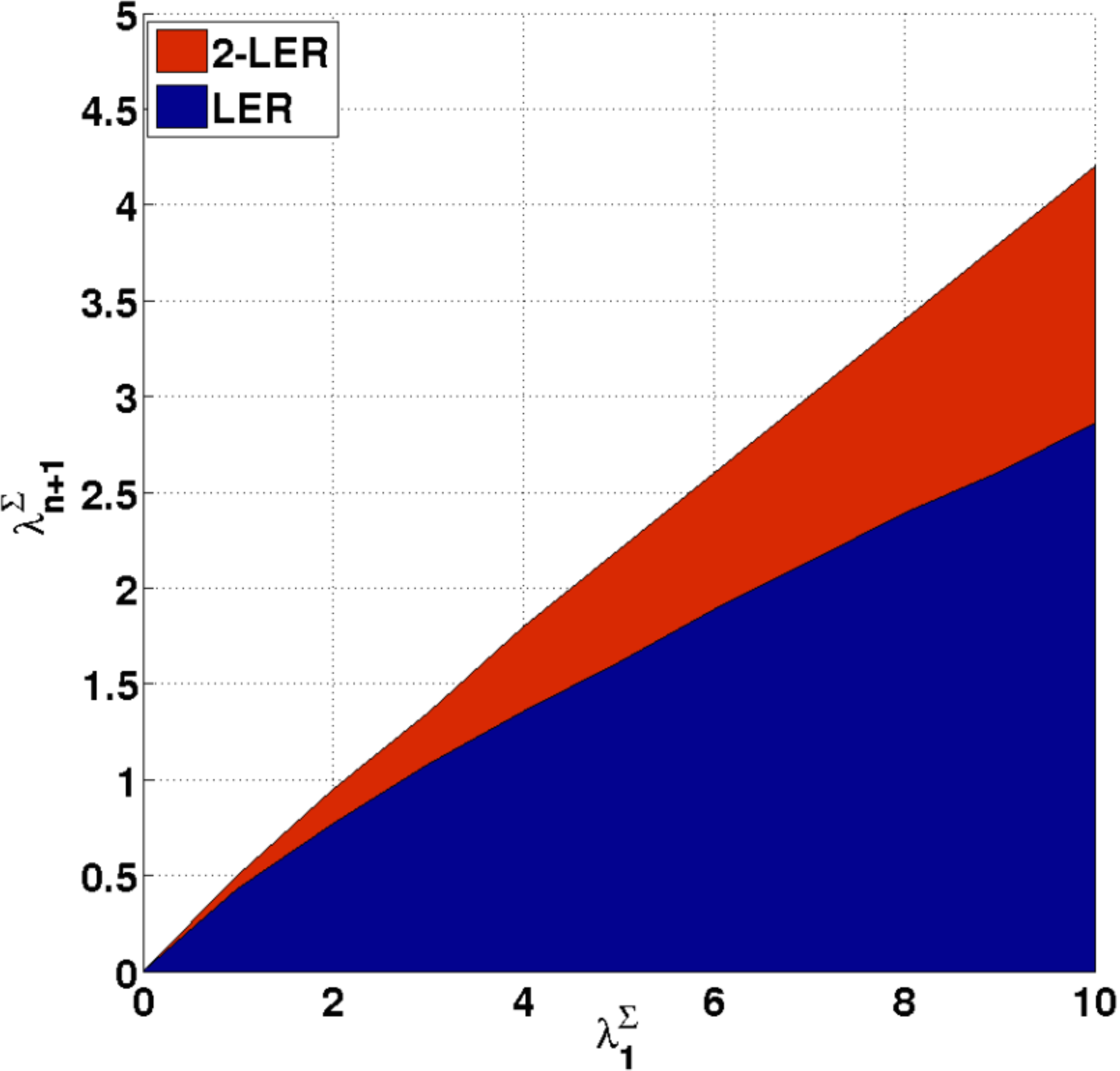}
\end{center} 
\caption{
Blue area: ${\lambda_2^{\Sigma}}$ vs. ${\lambda_1^{\Sigma}}$.  The area shows the LER optimal region.
\hspace*{10mm} Red area: ${\lambda_3^{\Sigma}}$ vs. ${\lambda_1^{\Sigma}}$.  The area illustrates the $2$-LER optimal
\hspace*{10.7mm} region for ${\lambda_2^{\Sigma}=0.5\lambda_1^{\Sigma}}$. For both the areas ${P_\text{R}=P_\text{S}=10\text{ dB}}$.
 }
\label{fig:2-LER}
\end{figure}

\subsection{Near Optimal Power Allocation in a Relay with Three Antennas}
\label{Sec:Near Optimal Power Allocation}
Similar to the algorithm of Section~\ref{Sec:Two Antennas in the Relay}, introduced for power allocation in a two antenna relay, a power allocation method will be introduced for the three cases (\textbf{\textit{Case 1/2/3}}) discussed at the beginning of this section. 

\paragraph*{\textbf{Proposition}} Allocate all available power in the relay to the largest eigenmode \emph{if} (\ref{eq:LER-Constraint}) holds~(i.e., LER), otherwise check (\ref{eq:2-LERopt-Final}) and allocate proportional power to the two largest eigenmodes if  (\ref{eq:2-LERopt-Final}) holds~(i.e., $2$-LER); in the case when neither (\ref{eq:LER-Constraint}) nor (\ref{eq:2-LERopt-Final}) hold, then allocate the power proportionally to all three eigenmodes; i.e., $\lambda_1^G = \lambda_1^\Sigma g_3$, $\lambda_2^G = \lambda_2^\Sigma g_3$ and $\lambda_3^G = \lambda_3^\Sigma g_3$ where
\begin{equation}
\label{eq:g3}
g_3 = \frac{P_\text{R}}{P_\text{S}(\lambda_1^{\Sigma 2}+\lambda_2^{\Sigma 2}+\lambda_3^{\Sigma 2})
+N_0(\lambda_1^\Sigma+\lambda_2^\Sigma+\lambda_3^\Sigma)}
\end{equation}
The algorithm is summarized in Table~\ref{table:nR=3}. $\hfill\blacksquare$

 {Fig.~\ref{fig:C_3RelAnt} illustrates the transmission rates using the proposed algorithm. Comparison with the benchmark confirms that the proposed algorithm is effectively optimal.}

\begin{table}[h]
\caption{Power Allocation in a Relay with $n_\text{R}=3$}
\centering
\begin{tabular}{l}
\hline
\hline
Step 1: Set \:$ P_1=\frac{P_\text{R}}{N_0+P_\text{S}\lambda_1^{\Sigma}}$ \\ 
Step 2: Check the inequality in (\ref{eq:LER-Constraint})\\ 
Step 3: If Step 2 is true \\
\hspace*{15mm} Set $\lambda_1^G =P_1$,  $\lambda_2^G=\lambda_3^G = 0$ and Quit. \\
Step 4: Set $P_2$ from \eqref{eq:P2} and $\alpha_2$  from \eqref{eq:alpha2}\\
Step 5: Check the inequality in (\ref{eq:2-LERopt-Final})\\ 
Step 6: If Step 5 is true \\
\hspace*{15mm} Set 
$\lambda_1^G=\lambda_1^{\Sigma}g_2$, $\lambda_2^G=\lambda_2^{\Sigma}g_2$.,  $\lambda_3^G = 0$ and Quit. \\
\hspace*{10.7mm} else\\
\hspace*{15mm} Set $\lambda_1^G=\lambda_1^{\Sigma}g_3$, $\lambda_2^G=\lambda_2^{\Sigma}g_3$ and $\lambda_3^G=\lambda_3^{\Sigma}g_3$
\vspace*{0.5mm}\\
\hline
\end{tabular}
\label{table:nR=3}
\end{table}

\section{Proposed Power Allocation in a Relay with an Arbitrary Number of Antennas}
\label{Sec:n Antennas in the Relay}
In this section, let us assume that the relay node is equipped with an arbitrary number of the antennas (say $n_\text{R}$). As discussed in earlier sections, in order to achieve capacity, the relay must assign an appropriate amount of its available power per eigenvector.   Following the same approach discussed in Section~\ref{Sec:Three Antennas in the Relay}, one can assume $\kappa$  cases where, depending on the the system parameters (i.e.,~$\boldsymbol{\Sigma}$, $P_\text{S}$ and $P_\text{R}$), $n$-LER will be the capacity approaching transmission method; it means that transmission via $n$ (${n \leq \kappa}$) eigenvectors approaches capacity, and so, only $n$ eigenvectors should be assigned power and the rest of the eigenvectors should be set to zero (i.e., $\lambda_j^G>0 \text{ for } j\leq n$ and $\lambda_j^G=0 \text{ for } j> n$ ).
\begin{figure}[t]
\begin{center}
        \includegraphics[width=0.42\textwidth, height =0.40 \textwidth]{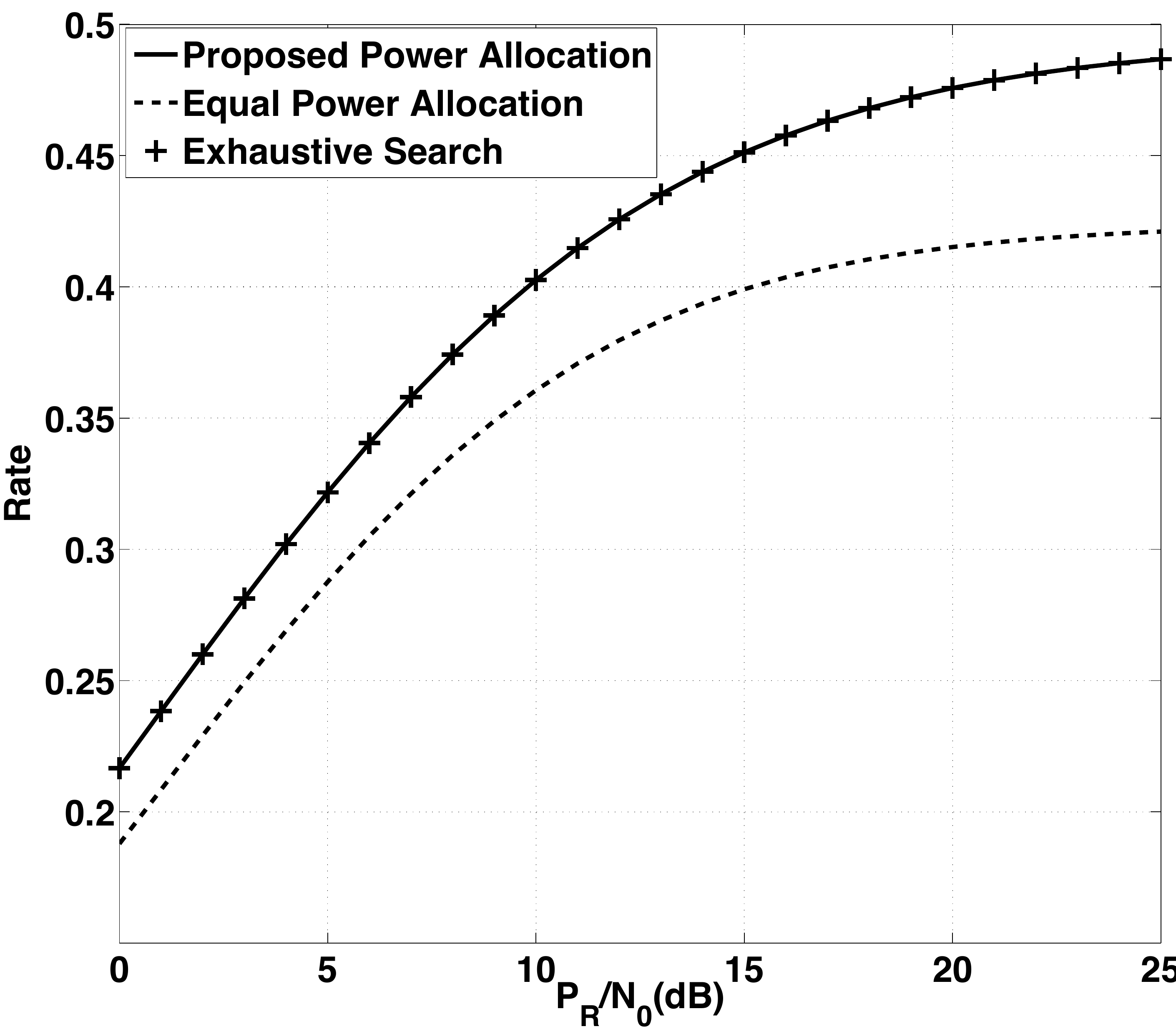}
\end{center} 
\caption{ {Rate vs. transmit power at the relay. The rate values are obtained using the proposed power allocation algorithm, equal power allocation and the benchmark. $n_\text{S} =2$, $n_\text{R}=3$, ${N_0 = 1}$, ${P_\text{S}=0~\text{dB}}$, and inter-antenna correlation ${\rho_{12} = 0.7}$, ${\rho_{23} = 0.5}$ and ${\rho_{13} = 0.2}$. The exhaustive search is performed over discrete values of transmission rates that were obtained using various values of ~${\lambda^G_1\text{ and }\lambda^G_2}$ (step size $0.1$~dB) that fulfil~(\ref{eq:RelayPowConstraintLambda}).} 
 }
\label{fig:C_3RelAnt}
\end{figure}

In Section \ref{Sec:Two Antennas in the Relay}, we introduced the necessary and sufficient condition under which transmission via one eigenmode~(largest eigenmode) achieves capacity; later on, in Section \ref{Sec:Three Antennas in the Relay}, a necessary and sufficient condition was derived, under which, transmission via the  two largest eigenmodes approaches maximum transmission rate. One can extend the same concept and derive a necessary and sufficient condition under which transmission via $n$ (${n \leq \kappa}$) eigenvectors will maximize the rate with the assumption of proportional power allocation.

Analogous to Section \ref{Sec:Three Antennas in the Relay},   let us assume 
that from the available power in the relay, ${\epsilon>0}$ is assigned 
to $\lambda_{n+1}^G$ and the rest of the power is proportionally distributed 
between ${\lambda_1^G\cdots\lambda_n^G}$. It is easy to conclude  
that  $\frac{\partial C_{av}(\lambda_{n+1}^G)}{\partial \lambda_{n+1}^G}|_{\lambda_{n+1}^G\to 0}\leq 0 $ is equivalent to the fact that assigning 
power to ${\lambda_{n+1}^G}$ will result in rate loss, and 
consequently, since 
${\lambda_1^G\geq \lambda_2^G\geq \cdots \lambda_{n_\text{R}}^G }$, the egienvectors corresponding to 
$\lambda_{j}^{\Sigma}$ for ${j\geq n+1}$ should not be assigned power.

 \begin{thm}
The necessary and sufficient condition under which transmission from $n$ largest eigenmodes~($n$-LER transmission) approaches capacity is:
 \begin{eqnarray}
 \label{eq:n-LERopt}
 \lambda_{n+1}^{\Sigma} \leq
\frac{(\alpha_n+P_n\lambda^{\Sigma}_{n+1})\mathcal{D}(\lambda^{\Sigma}_{1,\cdots, n},P_n) -\alpha_n\mathbb{E}\lbrace\frac{1}{1+P_n Z_n} \rbrace   }
{ P_n\mathbb{E}\{ \frac{1+\lambda^{\Sigma}_{n+1}\gamma_\text{S}  Y}{1+P_n Z_n}\}}
 \end{eqnarray} 
where
\begin{eqnarray}
\label{eq:P_n}
P_n=\frac{P_\text{R}}{N_0\sum_{m=1}^{n}\lambda_m^{\Sigma} +P_\text{S}\sum_{m=1}^{n}\lambda_m^{\Sigma 2}}\\
\label{eq:alpha_n}
\alpha_n = \frac{
N_0+P_\text{S}\lambda_{n+1}^{\Sigma}
}{N_0\sum_{m=1}^{n}\lambda_m^{\Sigma} +P_\text{S}\sum_{m=1}^{n}\lambda_m^{\Sigma 2}}
\end{eqnarray}
and
\begin{eqnarray}
\label{eq:Dn}
\mathcal{D}(\lambda^{\Sigma}_{1,\cdots, n},P_n) = 
\sum_{m=1}^{n}\frac{(\lambda^{\Sigma}_m)^{n-2}}
{
\prod\limits_{{k=1}\atop{k\neq m}}^{n} (\lambda^{\Sigma}_m - \lambda^{\Sigma}_k)
}
\Gamma(0,\zeta_m)
\mathrm{e}^{\zeta_m}
\end{eqnarray}
 \end{thm}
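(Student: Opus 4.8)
The plan is to generalize the derivation of the $2$-LER condition in \eqref{eq:2-LERopt-Final} (equivalently \cite[Eq.~34]{MoGo:2014TWC}) verbatim, with $n$ playing the role of $2$ and $n+1$ the role of $3$. First I would set up the conditional rate: assume proportional power allocation among the $n$ largest eigenmodes, so that $\lambda_m^G=\lambda_m^\Sigma g_n$ for $m\le n$, while $\epsilon=\lambda_{n+1}^G>0$ is split off for the $(n+1)$-th eigenmode. Imposing the relay power constraint \eqref{eq:RelayPowConstraintLambda} gives $g_n=P_n-\alpha_n\lambda_{n+1}^G$, where $P_n$ and $\alpha_n$ are exactly \eqref{eq:P_n} and \eqref{eq:alpha_n} — these are immediate analogues of \eqref{eq:P2}–\eqref{eq:alpha2}. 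Substituting into \eqref{eq:ErgodicCapactityLog-gammaD} and splitting the $\log$ of a ratio, I obtain a two-term expression $C(\cdot)=\log(1+(P_n-\alpha_n\lambda_{n+1}^G)Z_n+\cdots)-\log(1+\cdots)$ generalizing \eqref{eq:S1-S2-S3}, with $Z_n=\sum_{m=1}^n\lambda_m^{\Sigma2}W_mX_m$ the obvious generalization of \eqref{eq:Z2}.

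Next I would differentiate $C(\cdot)$ with respect to $\lambda_{n+1}^G$, take the expectation, and evaluate the limit $\lambda_{n+1}^G\to0$, producing the generalization of \eqref{eq:SecondExpectation-2LER-line1}: three expectation terms, of which the one containing $X_{n+1}$ factors because $X_{n+1}$ is independent of $Z_n$ and $Y$ with $\mathbb{E}\{X_{n+1}\}=1$. The term $-\mathbb{E}\{(\lambda_{n+1}^\Sigma X_{n+1}-\alpha_n\sum_{m\le n}\lambda_m^\Sigma X_m)/(1+P_n\sum_{m\le n}\lambda_m^\Sigma X_m)\}$ must be reduced to the closed form $-(\alpha_n/P_n+\lambda_{n+1}^\Sigma)\mathcal{D}(\lambda^\Sigma_{1,\dots,n},P_n)$. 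This is the step I expect to be the main obstacle: it requires computing $\mathbb{E}\{(1+P_n\sum_{m\le n}\lambda_m^\Sigma X_m)^{-1}\}$ where the $X_m$ are i.i.d.\ unit-mean exponentials with distinct coefficients $P_n\lambda_m^\Sigma$. The standard route is a partial-fraction expansion of $\prod_{m\le n}(s+P_n\lambda_m^\Sigma)^{-1}$ over the simple poles, which introduces the Lagrange-type weights $(\lambda^\Sigma_m)^{n-2}/\prod_{k\ne m}(\lambda^\Sigma_m-\lambda^\Sigma_k)$ in \eqref{eq:Dn}, followed by evaluating each resulting one-dimensional integral as an exponential-integral, giving the $\Gamma(0,\zeta_m)\mathrm{e}^{\zeta_m}$ factors with $\zeta_m=(P_n\lambda_m^\Sigma)^{-1}$; one then checks this collapses to \eqref{eq:D2} when $n=2$. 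Care is needed to confirm the power $n-2$ on $\lambda_m^\Sigma$ (it arises from the combination of the $\lambda_m^\Sigma X_m$ numerator terms weighted against the product in the denominator).

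Finally, collecting the three terms and rearranging to isolate $\lambda_{n+1}^\Sigma$ yields \eqref{eq:n-LERopt}; the inequality $\frac{\partial C_{av}}{\partial\lambda_{n+1}^G}\big|_{\lambda_{n+1}^G\to0}\le0$ is the necessary condition. For sufficiency one must rule out that the critical point is a minimum or saddle, i.e.\ show $\frac{\partial^2 C_{av}(\lambda_{n+1}^G)}{\partial(\lambda_{n+1}^G)^2}\big|_{\lambda_{n+1}^G\to0}<0$; I would invoke \cite[App.~D]{MoGo:2014TWC}, where this is proved for $n=2$, and note the argument there carries over mutatis mutandis since the dependence on $\lambda_{n+1}^G$ enters $C(\cdot)$ through the same functional form (a ratio of affine functions of $\lambda_{n+1}^G$ inside the two logarithms). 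This establishes that \eqref{eq:n-LERopt} is a necessary and sufficient condition for $n$-LER (under the standing proportional-power-allocation assumption) to be the rate-maximizing choice against perturbing into the $(n+1)$-th eigenmode.
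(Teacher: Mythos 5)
Your proposal is correct and takes essentially the same route as the paper, whose own proof of this theorem is just the one-line remark that it is ``similar to that of Section~\ref{Sec:Optimality of Two Largest Eigenmode Relaying: Necessary Condition}''; you have spelled out exactly the steps that this entails (proportional allocation with $g_n=P_n-\alpha_n\lambda_{n+1}^G$, differentiation of the two-logarithm form at $\lambda_{n+1}^G\to 0$, the partial-fraction/hypoexponential evaluation of $\mathbb{E}\{(1+P_n\sum_{m\le n}\lambda_m^\Sigma X_m)^{-1}\}$ yielding the $\Gamma(0,\zeta_m)\mathrm{e}^{\zeta_m}$ terms, and the second-derivative check imported from \cite[App.~D]{MoGo:2014TWC}). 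Incidentally, the consistency check you propose is worth carrying out: it reveals that \eqref{eq:Dn} as printed is missing a factor $1/P_n$ relative to the $n=2$ case \eqref{eq:D2}, which your derivation route would produce correctly.
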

 \begin{proof}
 The proof is similar to that of Section \ref{Sec:Optimality of Two Largest Eigenmode Relaying: Necessary Condition}.
 \end{proof}
 Note that \eqref{eq:n-LERopt} is a general form of \eqref{eq:LER-Constraint} and \eqref{eq:2-LERopt-Final}. In fact, \eqref{eq:n-LERopt} will determine the eigenvectors that should be assigned power proportional to their correlation power. The power allocation algorithm is summarized in Table~\ref{table:nR=n}. 
The algorithm is indeed analogous to the water-filling algorithm: the purpose is to find the eigenmodes that should be assigned power for transmission and to discard the ``weak" eigenmodes that result in rate-loss if assigned power. Note that the expressions derived in \eqref{eq:LER-Constraint},
\eqref{eq:2-LERopt-Final} and \eqref{eq:n-LERopt} involve expectation operations that do not seem to lend themselves to calculation in closed-form; therefore, in practical implementation of the system one should implement them using numerical methods. Please see \cite{MoGo:2014TWC} wherein a numerical integration expression is derived.

 Fig.~\ref{fig:C_4RelAnt} illustrates the transmission rates for the proposed algorithm for a relay with four antennas and correlation coefficients as described in the caption. Clearly the proposed algorithm agrees with the rates obtained using exhaustive search. Moreover its superiority over equal power transmission is evident.
The values chosen for inter-antenna correlation in the  numerical simulations in Figs.~\ref{fig:C_3RelAnt} and \ref{fig:C_4RelAnt}  are selected to be relatively high, and to reflect what might be expected in a relatively closely-spaced linear array.  In these cases , the proposed algorithm demonstrates performance  extremely close to the optimum. Note that when $\rho_{ij}\to 0$~(smaller inter-antenna correlation), the spatial correlation diminishes and so the equal power transmission will be the optimum method, which is indeed guaranteed by the proportional power allocation proposed in \eqref{eq:proportional power}; this is demonstrated by numerical simulation in Fig.~\ref{fig:C_2RelAnt}.
Hence, since our algorithm is provably optimum at low correlation and is shown by simulation to be very close to optimum for a typical case of high correlation, it is at least a reasonable hypothesis that it is near optimum for all cases of practical interest.  

\begin{table}[h]
\caption{Power Allocation in a Relay with arbitrary $n_\text{R}$}
\centering
\begin{tabular}{l}
\hline
\hline
Initiate $n=1$\\
\textit{while} $n\leq\kappa$\\
step 1: Set $P_n$ from \eqref{eq:P_n} and $\alpha_n$ from \eqref{eq:alpha_n}\\
step 2: Check the inequality in \eqref{eq:n-LERopt}\\ 
step 3: If Step 2 is true \\
\hspace*{7mm} Set $\lambda_j^G=\lambda_j^{\Sigma}g_n$ for $j\leq n$ \\
\hspace*{7mm} Set  $\lambda_j^G = 0$ for $j> n+1$ and Quit \textit{while}.\\
 else\\
\hspace*{7mm} Set $n \leftarrow n+1$ and go to Step~1\\
end (end of \textit{while} when $n>\kappa$)
\vspace*{0.5mm}\\
\hline
\end{tabular}
\label{table:nR=n}
\end{table}  
 \begin{figure}[t]
\begin{center}
        \includegraphics[width=0.43\textwidth, height =0.40 \textwidth]{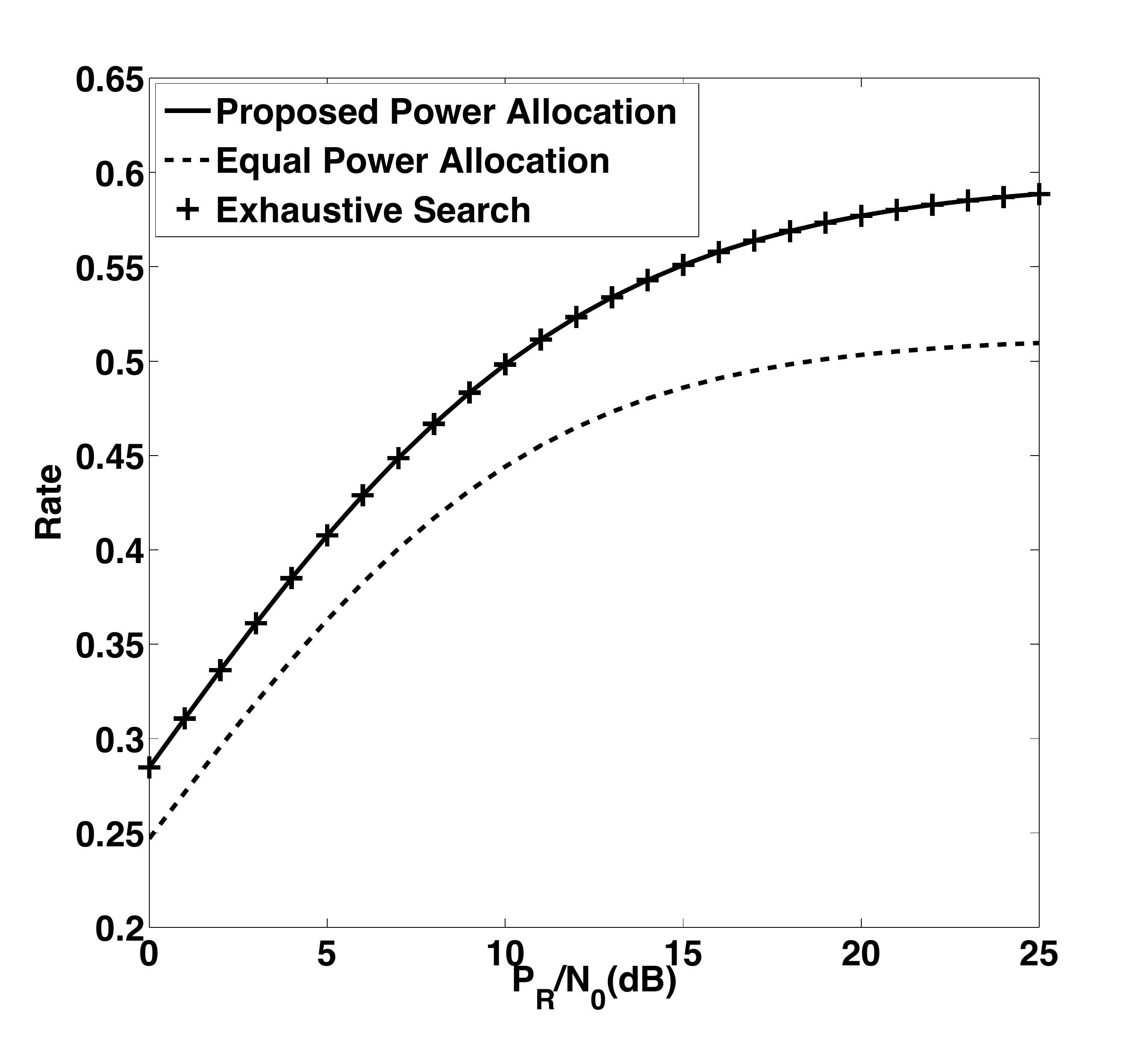}
\end{center} 
\caption{  {Rate vs. transmit power at the relay. The rate values are obtained using the proposed power allocation algorithm, equal power allocation and the benchmark. $n_\text{S} =2$, $n_\text{R}=4$, ${N_0 = 1}$,    ${P_\text{S}=0~\text{dB}}$, and inter-antenna correlation ${\rho_{12} = 0.7}$, ${\rho_{13} = 0.5}$, ${\rho_{14} = 0.3}$, ${\rho_{23} = 0.7}$, ${\rho_{24} = 0.5}$ and ${\rho_{34} = 0.7}$. The exhaustive search is performed over discrete values of transmission rates that were obtained using various values of ~${\lambda^G_i}$ (step size $0.1$~dB) that fulfil~(\ref{eq:RelayPowConstraintLambda}).} 
 }
\label{fig:C_4RelAnt}
\end{figure}

\section{Conclusion}
\label{Sec:Conclusion} 
 This paper studies the statistical characteristics of the received SNR at the destination in a MIMO relay network when the relay node experiences fading correlation. It is assumed that the relay node has access only to the statistical CSI. In order to approach the ergodic capacity of the system, based on the available statistical channel knowledge at the relay, a new relay precoder design methodology is introduced. The proposed method is analogous to the water-filling algorithm; it searches for the largest eigenmodes that should be assigned power and discards the remaining eigenmodes. The simulations demonstrate good agreement between the proposed method and the benchmark which is obtained using exhaustive search.

\appendices
\section{Proof of the SNR Distribution: Full Correlation}
\label{app:SNR Distribution: Full Correlation Scenario}
By combining \eqref{eq:hR-XY} and \eqref{eq:X-FC}, one can write ${\gamma_\text{D} = \gamma_\text{S}n_\text{R}\frac{YV}{1+V}}$, and so, assuming $w=\frac{x}{\gamma_\text{S}n\text{R}}$ we have
\begin{eqnarray}
\label{eq:F-qD-FC}
F_{\gamma_{\text{D}-{FC}}}(x) &=& \mathbb{P}(\frac{YV}
{1+V}<w)=\mathbb{P}(V<\frac{w}{Y-w})\nonumber\\
&&\hspace*{-20mm}1-\mathrm{e}^{\frac{-w}{\lambda_1^G n_\text{R}(Y-
w)}}=\int_0^{\infty}(1-\mathrm{e}^{\frac{-w}
{\lambda_1^G n_\text{R}(y-w)}})f_Y(y)\mathrm{d}y\nonumber\\
&&1 - \int_0^{\infty}\mathrm{e}^{\frac{-w}
{\lambda_1^G n_\text{R}(y-w)}}f_Y(y)\mathrm{d}y
\end{eqnarray}
Note that V is exponentially distributed, and so, ${\mathbb{P}(V<\frac{w}{Y-w})}$ is non-zero only for $Y>0$; consequently, by substituting $f_Y(y)$ in \eqref{eq:F-qD-FC} and  assuming the proper domain for the integral, we have 
\begin{eqnarray}
\label{eq:F-qD-FC}
F_{\gamma_{\text{D}-{FC}}}(x)=1 - \frac{n_\text{S}^{n_\text{S}}}{(n_\text{S}-1)!} \int\limits_w^{\infty}y^{n_\text{S}-1}\mathrm{e}^{-\frac{w}
{\lambda_1^G n_\text{R}(y-w)}-n_\text{S}y} \mathrm{d}y
\end{eqnarray}

by changing variable according to $t=y-w$, assuming ${(t+w)^{n_\text{S}-1}=\sum\limits_{m=0}^{n_\text{S}-1}\frac{(n_\text{S}-1)!}{m! (n_\text{S}-m-1)!}t^m w^{n_\text{S}-m-1} }$ and applying \cite[3.471.9]{GrRy:2007}, $F_{\gamma_{\text{D}-{FC}}}(x)$ will be derived as 
\begin{eqnarray}
\label{eq:CDF-gammaD-FC22}
F_{\gamma_{\text{D}-{FC}}}(w)&=& 1-2(n_\text{S}w)^{n_\text{S}}\mathrm{e}^{-n_\text{S}w}\\
&& \hspace*{-10mm}\times\sum\limits_{m=0}^{n_\text{S}-1}
\frac{(\lambda_1^G n_\text{S}n_\text{R}w)^{-(m+1)/2}}{m! (n_\text{S}-m-1)!}\mathrm{K}_{m+1}(2\sqrt{\frac{n_\text{S}w}{\lambda_1^G n_\text{R}}})
\nonumber
\end{eqnarray}
and so, \eqref{eq:CDF-gammaD-FC} is proved.

\section{Statistics of Random Variable $X$}
\label{app:cdfX}
As a complete proof of \eqref{eq:cdf-X} is lengthy, we only prove the case of $\kappa =2$. Following the same approach, the extension of the proof  to larger values of $\kappa$~(i.e., ${\kappa =3, 4, \cdots, n}$) is straightforward. Then by the rule of mathematical induction, it is easy to obtain the general expression in \eqref{eq:cdf-X}.  

\subsection*{$F_X(x)$ for the Case of $\kappa=2$:}
The numerator and the denominator random variable $X$ in \eqref{eq:X} include summation of random variables $X_j$ which follow exponential distribution with unit mean. For simplicity, we assume $\kappa=2$ (i.e., ${X= \frac{ \lambda_1^G\lambda_1^{\Sigma 2}X_1 + \lambda_2^G\lambda_2^{\Sigma 2}  X_2 }{1+ \lambda_1^G\lambda_1^\Sigma X_1 \lambda_2^G\lambda_2^\Sigma X_2 }}$)
and derive $F_X(x)$ of \eqref{eq:cdf-X}. However, in order to simplify notation, in this section, let us substitute ${\lambda_j^G X_j \to X_j}$ consequently $X_j$ is distributed exponentially with mean $\lambda_j^G$. 
One can write
\begin{eqnarray}
\label{eq:X-proof-start}
F_X(x)=\mathbb{P}(X<x)=\mathbb{P}\big(\frac{ \lambda_1^{\Sigma 2}X_1 + \lambda_2^{\Sigma 2}  X_2 }{1+ \lambda_1^\Sigma X_1 +\lambda_2^\Sigma X_2 } <x
\big).
\end{eqnarray}
 Applying basic algebraic manipulation, \eqref{eq:X-proof-start} can be simplified according to 
\begin{eqnarray}
\label{eq:X-proof-manipulate1}
\mathbb{P}(X<x)&\triangleq &\mathbb{P}\big(
X_1<\frac{
x+\lambda_2^\Sigma(x- \lambda_2^\Sigma)X_2
}{\lambda_1^\Sigma(\lambda_1^\Sigma - x)}
\big)
\end{eqnarray}
Let us split the problem of  deriving $\mathbb{P}(X<x)$ in \eqref{eq:X-proof-manipulate1} over four different intervals: \textit{\textbf{A)}} $x<0$, 
\textit{\textbf{B})} $0<x<\lambda_2^\Sigma$ 
 \textit{\textbf{C)}} $\lambda_2^\Sigma<x<\lambda_1^\Sigma$
  and \textit{\textbf{D)}} $\lambda_1^\Sigma<x$ and derive  $\mathbb{P}(X<x)$ for each interval individually.

 \subsection*{\textbf{A)}  $x<0$:}
  Note that the random variables $X$, $X_1$ and $X_2$ are non-negative random variables. Therefore, for $x<0$, we have ${\mathbb{P}(X<x) =0}$.
 
  \subsection*{\textbf{B)} $0<x<\lambda_2^\Sigma$:}
  Assuming $0<x<\lambda_2^\Sigma$; clearly, $(\lambda_1^\Sigma - x)$ at the denominator of \eqref{eq:X-proof-manipulate1} is positive. Therefore, since $X_1>0$, the expression ${x+\lambda_2^\Sigma(x- \lambda_2^\Sigma)X_2}$ at the numerator of  \eqref{eq:X-proof-manipulate1} must be positive as well and, hence, 
\begin{eqnarray}
\label{eq:X2-positive}
X_2<\frac{-x}{\lambda_2^\Sigma(x-\lambda_2^\Sigma)}
\end{eqnarray}
must hold. Therefore, \eqref{eq:X-proof-manipulate1} can be simplified to 
\begin{eqnarray}
\label{eq:X2-integral}
  \mathbb{P}(X<x) &=&
  \\
  &&\hspace*{-15mm} \int\limits_0^{\frac{-x}{\lambda_2^\Sigma(x-\lambda_2^\Sigma)}}
\mathbb{P}\big(
X_1<\frac{
x+\lambda_2^\Sigma(x- \lambda_2^\Sigma)x_2
}{\lambda_1^\Sigma(\lambda_1^\Sigma -x)}
\big) \underbrace{\frac{\mathrm{e}^{-x_2/\lambda_2^G}}{\lambda_2^G}}_{f_{X_2}(x_2)}
  \mathrm{d}x_2.\nonumber
\end{eqnarray} 
Considering that ${\mathbb{P}(X_1<t) =1-\mathrm{e}^{-t/\lambda_2^G}}$, it is straightforward to prove, from \eqref{eq:X2-integral} , that for $0<x<\lambda_2^\Sigma$,
\begin{eqnarray}
\label{eq:X2-integral-final1}
  \mathbb{P}(X<x) = 1-\frac{c_2}{c_2-c_1}\mathrm{e}^{c_1 x} - \frac{c_1}{c_1-c_2}\mathrm{e}^{c_2 x}.
\end{eqnarray} 
where $c_j$ is defined in \eqref{eq:c-j}.

 \subsection*{\textbf{C)} $\lambda_2^\Sigma<x<\lambda_1^\Sigma$:}
 
 It is easy to deduce from \eqref{eq:X-proof-manipulate1} that for $\lambda_2^\Sigma<x<\lambda_1^\Sigma$, the numerator and the denominator of \eqref{eq:X-proof-manipulate1} is positive for every value of the random variable $X_2$ and, hence, 
\begin{eqnarray}
\label{eq:X2-interval2}
  \mathbb{P}(X<x) = \int\limits_0^{\infty}
\mathbb{P}\big(
X_1<\frac{
x+\lambda_2^\Sigma(x- \lambda_2^\Sigma)x_2
}{\lambda_1^\Sigma(\lambda_1^\Sigma -x)}
\big)\frac{\mathrm{e}^{-x_2/\lambda_2^G}}{\lambda_2^G}
  \mathrm{d}x_2
\end{eqnarray} 
which leads to 
\begin{eqnarray}
\label{eq:X2-integral-final2}
  \mathbb{P}(X<x) = 1-\frac{c_2}{c_2-c_1}\mathrm{e}^{c_1 x} 
\end{eqnarray} 

 \subsection*{\textbf{D)} $x>\lambda_1^{\Sigma}$:}
For the given interval, clearly, the denominator of \eqref{eq:X-proof-manipulate1} is negative while the numerator is positive; hence, one can easily conclude that assuming  $x>\lambda_1^{\Sigma}$ is equivalent to assuming $X_1<0$. On the other hand, since $X_1$ follows the exponential distribution, it is necessarily positive, and so, $x>\lambda_1^{\Sigma}$ is an invalid assumption. Consequently,  $ \mathbb{P}(X<x) = 1 $ for ${x>\lambda_1^{\Sigma}}$.

 \section*{Acknowledgement} 
The work in this paper was supported by the European Commission through the 7th framework programme FP$7$-ICT DIWINE project under contract no.~318177. 

\balance

\end{document}